\documentclass[11pt,a4paper]{article}

\usepackage{amsmath,amssymb,amsthm,graphicx,braket,multirow,authblk,amsthm,dcolumn,latexsym,subcaption,mathtools,cite}
\usepackage[normalem]{ulem}
\usepackage[a4paper]{geometry}
\geometry{verbose,tmargin=3cm,bmargin=3cm,lmargin=2.5cm,rmargin=2.5cm}
\DeclareSymbolFontAlphabet{\amsmathbb}{AMSb}%

\newtheorem{Proposition}{Proposition}[section]
\newtheorem{Corollary}{Corollary}[section]

\theoremstyle{definition}
\newtheorem{Example}{Example}

\def\oper{{\mathchoice{\rm 1\mskip-4mu l}{\rm 1\mskip-4mu l}
		{\rm 1\mskip-4.5mu l}{\rm 1\mskip-5mu l}}}
\newcommand{\hilb}{\mathcal{H}}
\newcommand{\hilbs}{\mathcal{H}_{\rm S}}
\newcommand{\hilbb}{\mathcal{H}_{\rm B}}
\newcommand{\vac}{\mathrm{vac}}
\newcommand{\e}{\mathrm{e}}
\renewcommand{\H}{\textbf{H}}
\renewcommand{\Im}{\operatorname{Im}}
\renewcommand{\Re}{\operatorname{Re}}
\newcommand{\W}[1]{W\!\left(#1\right)}
\newcommand{\Wdag}[1]{W^\dag\!\left(#1\right)}
\newcommand{\pvint}{\mathrm{PV}\!\int}
\newcommand{\ketbra}[2]{| #1 \rangle\! \langle #2 | }
\newcommand{\ii}{\mathrm{i}}
\newcommand{\tr}{\operatorname{Tr}\!}

\usepackage{hyperref}
\hypersetup{colorlinks}
\usepackage{xcolor}
\definecolor{cblue}{rgb}{0.16, 0.32, 0.75}
\definecolor{cred}{rgb}{0.7, 0.11, 0.11}
\hypersetup{%
	,linkcolor=cred
	,citecolor=cblue
	,urlcolor=black
}

\begin{document}
	
	\title{
		\textbf{Quantum regression in dephasing phenomena}
	}
	
	\author[$\hspace{0cm}$]{Davide Lonigro$^{1,2,}$\footnote{davide.lonigro@ba.infn.it}}
	\affil[$1$]{\small Dipartimento di Fisica and MECENAS, Universit\`{a} di Bari, I-70126 Bari, Italy}
	\affil[$2$]{\small INFN, Sezione di Bari, I-70126 Bari, Italy}
	
	\author[$\hspace{0cm}$]{Dariusz Chru\'sci\'nski$^{3,}$\footnote{darch@fizyka.umk.pl}}
	\affil[$3$]{\small Institute of Physics, Faculty of Physics, Astronomy and Informatics, Nicolaus Copernicus University, Grudziadzka 5/7, 87-100 Toru\'n, Poland}
	
	\maketitle
	\vspace{-0.5cm}
	
	\begin{abstract}
		We investigate the validity of quantum regression for a family of quantum Hamiltonians on a multipartite system leading to phase-damping reduced dynamics. After finding necessary and sufficient conditions for the CP-divisibility of the corresponding channel, we evaluate a hierarchy of equations equivalent to the validity of quantum regression under arbitrary interventions; in particular, we find necessary conditions for a nontrivial dephasing to be compatible with quantum regression. In this framework, we study a class of dephasing-type generalized spin-boson (GSB) models, investigating the existence of qubit-environment coupling functions that ensure the exact validity of quantum regression.
	\end{abstract}
	
	\section{Introduction}
	
	Open quantum systems attract a lot of attention both from a theoretical and an experimental point of view: nowadays, the theory of open quantum systems plays a fundamental role in many areas of natural science including chemistry, atomic and molecular physics, quantum optics, condensed matter physics, and quantum information~\cite{Breuer,Legget,Weiss,RIVAS}.
	
	A fundamental feature of such systems is quantum coherence. A resource theory of quantum coherence, based on the seminal paper~\cite{Coh1}, was formulated in recent years~\cite{Coh2,Coh3,Coh4}; moreover, appropriate measures of coherence (analog of entanglement measures) were proposed~\cite{Coh1,Coh2}. Decoherence~\cite{Car,Zurek,DEC1,DEC2} -- the loss of quantum coherence  -- has a detrimental impact on the efficiency of quantum algorithms~\cite{QIT,Palma}; consequently, the protection of a system against decoherence is an important issue of modern quantum technologies.
	
	A particular instance of quantum decoherence is {\em dephasing}, which consists in the reduction of the off-diagonal elements of the density matrix associated with the system as a result of the interaction with the environment. A typical pure dephasing process is the reduced dynamics governed by the following total system-bath Hamiltonian:	
	\begin{equation}\label{I}
		\mathbf{H} = H_{\rm S} \otimes \oper_{\rm B} + \oper_{\rm S} \otimes H_{\rm B} + \sum_{j=0}^{d-1} \ketbra{j}{j} \otimes B_j ,
	\end{equation}
	where the Hamiltonian of the system reads $H_{\rm S} = \sum_{j=0}^{d-1} \epsilon_j \ketbra{j}{j}$. Indeed, in such a case, the Hamiltonian admits a direct sum decomposition:
	\begin{equation}\label{II}
		\mathbf{H} =  \sum_{j=0}^{d-1} \ketbra{j}{j} \otimes H_j ,
	\end{equation}
	with $H_j = \epsilon_j \oper_{\rm B} + H_{\rm B} + B_j$, i.e. $\mathbf{H}$ is block-diagonal. Due to Eq.~\eqref{II}, the reduced evolution of the system, starting from a factorized state $\rho\otimes\rho_{\rm B}$, is given by $\rho_t = \Lambda_t(\rho)$, where
	\begin{equation}\label{DM}
		\Lambda_t(\rho) = {\rm Tr}_{\rm B} \left[\e^{-\ii t\mathbf{H}} (\rho \otimes \rho_{\rm B})\,\e^{\ii t\mathbf{H}} \right] = \sum_{j,\ell=0}^{d-1} \varphi_{j\ell}(t) \ketbra{j}{j} \rho \ketbra{\ell}{\ell} ,
	\end{equation}
	with the dephasing functions $\varphi_{j\ell}(t)$ being defined via	
	\begin{equation}\label{}
		\varphi_{j\ell}(t) = {\tr}\left[ \e^{-\ii tH_j} \rho_{\rm B} \,\e^{\ii tH_\ell} \right] .
	\end{equation}
	The matrix elements $\rho_{j\ell}$ of the initial state evolve according to $\rho_{jl}(t) = \varphi_{j\ell}(t) \rho_{j\ell}$, that is, the off-diagonal elements $\rho_{jl}$ are ``dephased'' via $\varphi_{j\ell}(t)$ (note that $\varphi_{jj}(t)=1$). Therefore, for a given choice of the ``block'' Hamiltonians $H_j$ and of the initial bath state $\rho_{\rm B}$, all physical properties of the dephasing process are encoded into the dephasing matrix $\left(\varphi_{j\ell}(t)\right)_{j,\ell}$. Eq.~\eqref{DM} defines a dynamical map $\{\Lambda_t\}_{t\geq 0}$: for any $t \geq 0$, the map $\Lambda_t : \mathcal{B}(\hilbs) \rightarrow\mathcal{B}(\hilbs)$ is completely positive and trace-preserving (CPTP).
	
	Here we shall be concerned with the Markovianity properties of dephasing processes. There are several, inequivalent definitions of quantum non-Markovianity (cf.\ recent reviews~\cite{NM1,NM2,NM3,NM4,Jyrki1,Jyrki2}) which, essentially, can either solely involve the properties of the dynamical map $\{\Lambda_t\}_{t\geq 0}$ or those of the full unitary evolution of the system \textit{and} the bath. Clearly, in the latter case one needs full access to the system-bath dynamics.
	
	At the level of the dynamical map $\{\Lambda_t\}_{t\geq 0}$, the evolution is usually considered to be Markovian whenever it is CP-divisible~\cite{RHP}, that is, for any $t>s$ there exists a CPTP propagator $V_{t,s}$ such that $\Lambda_t = V_{t,s} \Lambda_s$; if $V_{t,s}$ is only positive and trace-preserving, one usually calls $\{\Lambda_t\}_{t\geq 0}$ to be P-divisible (cf.~\cite{Sabrina} for a refined classification). Another concept of Markovianity based on the distinguishability of quantum states was proposed in Ref.~\cite{BLP}: one calls $\{\Lambda_t\}_{t\geq 0}$ to be Markovian whenever, for any pair of initial states $\rho_1$ and $\rho_2$, the so-called BLP property is satisfied:
	\begin{equation}\label{BLP}
		\frac{\mathrm{d}}{\mathrm{d} t} \left\| \Lambda_t(\rho_1-\rho_2)\right\|_1 \leq  0 ,
	\end{equation}
	for any $t \geq 0$, with $\|\cdot\|_1$ denoting the trace norm. In fact, P-divisibility implies Eq.~\eqref{BLP}, but the vice versa is not true~\cite{Angel,Haikka,BOGNA,PRL-Angel}. Both approaches, i.e. CP-divisibility and the BLP property~\eqref{BLP}, have been successfully applied to several systems~\cite{NM1,NM2,NM3}.
	
	However, such notions of Markovianity are not suitable as a quantum generalization of the original classical Markov condition for the family of conditional probabilities~\cite{Hanggi1,Hanggi2}, consisting in a hierarchy of conditions for the conditional probabilities of the process~\cite{Kampen}:	
	\begin{equation}\label{Markov}
		p\!\left(x_n,t_n|x_{n-1},t_{n-1};\ldots;x_1,t_1\right) = p\!\left(x_n,t_n|x_{n-1},t_{n-1}\right).
	\end{equation}
	In fact, if the dynamical map is CP-divisible, then the corresponding propagator $V_{t,s}$ does provide a generalization of the 2-point object $p(y,t|x,s)$ where `$x$' and `$y$' are the possible outcomes of two measurements performed at times `$s$' and `$t$', respectively; however, the knowledge of the map $\{\Lambda_t\}_{t\geq 0}$ does not allow to define the remaining elements from the hierarchy~\eqref{Markov}.
	
	Having access to the system-bath evolution, one may take into account a refined notion of quantum Markovianity which provides a direct generalization of the original classical concept. A proper mathematical formulation of Markov quantum stochastic process was proposed in Refs.~\cite{Lewis,Lindblad,Accardi}: in this approach, the proper quantum generalization of the classical Markov property~\eqref{Markov} consists in a hierarchy of nontrivial conditions for multi-time correlation functions known as the quantum regression formula~\cite{Lax,Zoller}. Basically, quantum regression means that all multi-time correlations of the system, derived in terms of the full system-bath evolution and of interventions performed on the system alone, can be recovered in terms of the dynamical map $\{\Lambda_t\}_{t \geq 0}$ alone~\cite{Zoller}.
	
	It is also worth pointing out that, recently, an interesting approach to quantum Markovianity was proposed (cf.~\cite{kavan1,kavan2,kavan3,Kavan-PRX}): the Markovianity of a quantum process is characterized via the factorization of the so-called quantum process tensor of the system. Again, this approach requires full access to the system-bath dynamics. Interestingly, the factorization of the process tensor is essentially equivalent to the validity of quantum regression (cf.~\cite{NM4} for a comparative analysis).
	
	The analysis of Markovianity based on the quantum regression formula was already initiated in Refs.~\cite{Francesco} and~\cite{Bassano}, where it was shown that CP-divisible maps can violate quantum regression; in fact, even the so-called Markovian semigroups $\Lambda_t = \e^{t \mathcal{L}}$, with $\mathcal{L}$ being a Gorini-Kossakowski-Lindblad-Sudarshan (GKLS) generator~\cite{GKS,L}, can violate quantum regression. Recently, we discussed the validity of quantum regression for a class of generalized spin-boson (GSB) models yielding amplitude-damping dynamics~\cite{regression} (see also the recent papers~\cite{Teretenkov,Teretenkov2,Khan,QR-PRL}). 
	
	In this paper we provide a systematic study of CP-divisibility and quantum regression for the dephasing-type quantum evolution produced by system-bath Hamiltonians in the form~\eqref{II}, starting from the qubit case and eventually investigating the general scenario. The analysis of the general case is accompanied by a thorough investigation of a concrete example of paramount importance in open quantum system theory: the dephasing-type spin-boson model and its multilevel generalizations, which again belong to the family of GSB models.
	
	The paper is organized as follows:
	\begin{itemize}
		\item in Section~\ref{SEC-II} we provide a detailed analysis of the qubit dephasing dynamics in the general case: after recalling the conditions under which CP-divisibility holds (Prop.~\ref{prop:semigroup}), we find a hierarchy of necessary and sufficient conditions for quantum regression to hold under arbitrary interventions (Prop.~\ref{prop:regression}), and we show that a nontrivial dephasing can only satisfy such conditions if the block Hamiltonians do \textit{not} commute;
		\item in Section~\ref{SEC-III} we particularize our discussion to the dephasing-type spin-boson model, showing that quantum regression, depending on the choice of the qubit-boson coupling, holds in a particular limit (Prop.~\ref{prop:flat});
		\item in Section~\ref{SEC-IV} we finally investigate the qu$d$it dephasing dynamics ($d\geq2$), again characterizing CP-divisibility (Prop.~\ref{prop:semigroup-multi}) and deriving a hierarchy of necessary and sufficient conditions for quantum regression (Prop.~\ref{prop:regression_multi}); we also study dephasing-type GSB models, and investigate choices of the coupling functions for which quantum regression holds (Props.~\ref{prop:flat_multi0}--\ref{prop:flat_multi}).
	\end{itemize}
	Final considerations are collected in Section~\ref{SEC-V}.	
	
	\section{Qubit dephasing dynamics} \label{SEC-II}
	
	\subsection{Generalities}
	
	We shall consider a self-adjoint Hamiltonian $\mathbf{H}$ on a Hilbert space $\hilb=\hilbs \otimes\hilbb$, with $\dim\hilbs=2$, given by
	\begin{equation}\label{eq:h}
		\mathbf{H} = \ketbra{0}{0}\otimes H_0+\ketbra{1}{1}\otimes H_1\simeq\begin{pmatrix}
			H_0&\\&H_1
		\end{pmatrix},
	\end{equation}
	with $\ket{0}$, $\ket{1}$ being an orthonormal basis of $\hilbs$, and $H_0,H_1$ being two (possibly) unbounded self-adjoint operators on $\hilbb$, each defined on a dense domain $\mathcal{D}(H_j)\subset\hilbb$, $j=0,1$. The unitary propagator induced by $\mathbf{H}$ on the total Hilbert space decomposes as
	\begin{equation}
		\e^{-\ii t\mathbf{H}}=\ketbra{0}{0}\otimes\e^{-\ii t H_0}+\ketbra{1}{1}\otimes\e^{-\ii tH_1}.
	\end{equation}
	Consequently, the reduced dynamics induced by $\mathbf{H}$ on the finite-dimensional space $\hilbs$, represented by the family of completely positive and trace-preserving (CPTP) maps $\Lambda_t:\mathcal{B}(\hilbs)\rightarrow\mathcal{B}(\hilbs)$ defined as in Eq.~\eqref{DM}, with $\rho_{\rm B}\in\hilbb$ being a fixed density operator of the environment, reads
	\begin{equation}
		\Lambda_t(\rho)=\sum_{j,\ell=0,1}\tr\left[\e^{-\ii tH_{j}}\rho_{\rm B}\,\e^{\ii tH_{\ell}}\right]\,\ketbra{j}{j}\rho\ketbra{\ell}{\ell},
	\end{equation}
	that is, explicitly, a dephasing structure emerges:
	\begin{equation}\label{eq:channel}
		\Lambda_t(\rho)=\begin{pmatrix}
			\rho_{00}&\rho_{01}\,\varphi(t)\\
			\rho_{10}\,\varphi(t)^*&\rho_{11}
		\end{pmatrix},
	\end{equation}
	where
	\begin{equation}
		\varphi(t)=\tr\left[\e^{-\ii tH_0}\rho_{\rm B}\,\e^{\ii tH_1}\right].
	\end{equation}
	In particular, if $\rho_{\rm B}$ is a pure state, $\rho_{\rm B}=\ketbra{\psi_{\rm B}}{\psi_{\rm B}}$, then
	\begin{equation}\label{eq:dephasing}
		\varphi(t)=\Braket{\psi_{\rm B}|\e^{\ii tH_1}\e^{-\ii tH_0}\psi_{\rm B}}=\Braket{\e^{-\ii tH_1}\psi_{\rm B}|\e^{-\ii tH_0}\psi_{\rm B}},
	\end{equation}
	so that $|\varphi(t)|^2$ coincides with the \textit{fidelity} of the two states corresponding to the evolution of $\psi_{\rm B}$ generated by $H_0$ and $H_1$. In this sense, the dephasing is a byproduct of the distinguishability between the evolution induced by the two Hamiltonians, with the dephasing being trivial in the case $H_1=H_0$.
	
	We shall refer to $\varphi(t)$ as the \textit{dephasing function} of the process. Clearly, the properties of $\Lambda_t$ are entirely dependent on the dephasing function, which, in turn, depends on the choice of the environment state $\rho_{\rm B}$ and the two block operators $H_0$, $H_1$. In any case, $\varphi(t)$ always satisfies the following properties:
	\begin{itemize}
		\item $\varphi(0)=1$, and $|\varphi(t)|\leq1$;
		\item $t\mapsto\varphi(t)$ is continuous,
	\end{itemize}
	the latter following from the strong continuity of $t\mapsto\e^{-\ii tH_0},\e^{-\ii tH_1}$. Besides, the following  properties hold:
	\begin{Proposition}\label{prop:semigroup}
		Let $\varphi(t)\neq0$ for all $t\geq0$. Then the process $t\mapsto\Lambda_t$ is invertible; besides, the following statements are equivalent:
		\begin{itemize}
			\item[(i)] $\Lambda_t$ is CP-divisible;
			\item[(ii)] $\Lambda_t$ is P-divisible;
			\item[(iii)] $t\mapsto|\varphi(t)|$ is non-increasing for all $t\geq0$.
		\end{itemize}
		Finally, $\Lambda_t$ satisfies the semigroup property $\Lambda_t=\Lambda_{t-s}\Lambda_s$ for all $t\geq s\geq0$ if and only if
		\begin{equation}\label{eq:exp}
			\varphi(t)=\e^{-\left(\ii\Omega+\frac{\gamma}{2}\right)t},\qquad t\geq0
		\end{equation}
		for some $\Omega\in\mathbb{R}$ and $\gamma\geq0$.
	\end{Proposition}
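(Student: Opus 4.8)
The plan is to exploit the fact that, on the qubit, $\Lambda_t$ leaves the diagonal entries untouched and merely rescales the coherence $\rho_{01}$ by $\varphi(t)$ (and $\rho_{10}$ by $\varphi(t)^*$), so that the whole family is parametrised by a single scalar. First I would settle invertibility: since $\varphi(t)\neq0$, the linear map $\Lambda_t^{-1}$ that multiplies the off-diagonal entries by $1/\varphi(t)$ is a well-defined inverse of $\Lambda_t$. Consequently, for every $t\geq s\geq0$ the propagator $V_{t,s}:=\Lambda_t\Lambda_s^{-1}$ exists, and — being a composition of diagonal-preserving, coherence-rescaling maps — it is itself a trace-preserving dephasing map of the form \eqref{eq:channel}, now with dephasing function
\begin{equation}
\mu_{t,s}=\frac{\varphi(t)}{\varphi(s)} .
\end{equation}
This reduces every divisibility question to a property of the single complex number $\mu_{t,s}$.

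The core step is then to characterise positivity and complete positivity of a qubit dephasing map $\Phi_\mu$, i.e.\ the map acting as in \eqref{eq:channel} with $\varphi(t)$ replaced by a constant $\mu$. Computing its Choi matrix, whose only nontrivial $2\times2$ block is $\bigl(\begin{smallmatrix}1&\mu\\\mu^*&1\end{smallmatrix}\bigr)$ with eigenvalues $1\pm|\mu|$, shows that $\Phi_\mu$ is completely positive iff $|\mu|\leq1$. On the other hand, a direct check on a generic positive $\rho$ (using $\rho_{00}\rho_{11}\geq|\rho_{01}|^2$) shows that mere positivity of $\Phi_\mu$ is equivalent to the \emph{same} inequality $|\mu|\leq1$. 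I expect this coincidence of positivity and complete positivity for the dephasing class to be the one genuinely substantive point, and it immediately yields (i)$\Leftrightarrow$(ii). Applying the criterion to $V_{t,s}$, CP- (equivalently P-) divisibility holds iff $|\mu_{t,s}|=|\varphi(t)|/|\varphi(s)|\leq1$ for all $t\geq s$, i.e.\ iff $t\mapsto|\varphi(t)|$ is non-increasing, which is (iii).

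Finally, for the semigroup property I would note that $\Lambda_{t-s}\Lambda_s$ is the dephasing map with function $\varphi(t-s)\varphi(s)$, so that $\Lambda_t=\Lambda_{t-s}\Lambda_s$ for all $t\geq s\geq0$ is equivalent to the multiplicative functional equation $\varphi(s+r)=\varphi(s)\varphi(r)$ for all $s,r\geq0$. Since $\varphi$ is continuous, nowhere vanishing, and $\varphi(0)=1$, it admits a continuous logarithm $L(t)=\log\varphi(t)$ on the simply connected half-line with $L(0)=0$; the functional equation then becomes the additive Cauchy equation $L(s+r)=L(s)+L(r)$, whose continuous solutions are linear, $L(t)=ct$. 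Hence $\varphi(t)=\e^{ct}$, and the already-established constraint $|\varphi(t)|\leq1$ forces $\Re c\leq0$; writing $c=-(\ii\Omega+\tfrac{\gamma}{2})$ with $\Omega\in\mathbb{R}$ and $\gamma\geq0$ yields \eqref{eq:exp}. The converse is immediate, since any such $\varphi$ manifestly satisfies the functional equation. The only delicate ingredient here is the existence of the continuous logarithm, which is precisely what licenses the passage from the multiplicative to the additive equation.
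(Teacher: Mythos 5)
Your proof is correct and follows essentially the same route as the paper, which establishes the general $d$-level version (Prop.~\ref{prop:semigroup-multi}) by showing that the propagator is a Hadamard-multiplier map with ratio matrix $\Phi(t)\circ\Phi(s)^{\circ-1}$, that positivity and complete positivity of such maps both reduce to positive semidefiniteness of that matrix (your determinant check on $\rho_{00}=\rho_{11}=\rho_{01}=1/2$ is exactly the paper's evaluation on the all-ones matrix), and that the semigroup property forces the multiplicative Cauchy equation with exponential solution. Your explicit continuous-logarithm argument for the last step is a welcome elaboration of a point the paper treats tersely, but the underlying approach is the same.
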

	Prop.~\ref{prop:semigroup} is a special case of Prop.~\ref{prop:semigroup-multi} in Section~\ref{SEC-IV}, which will be proven later on. Notice that, in particular, CP-divisibility and P-divisibility are \textit{equivalent} for the dephasing channel, analogously to what happens to the amplitude-damping channel and its multilevel generalization studied in Ref.~\cite{regression}. Note that, if $\Lambda_t$ defines a dynamical semigroup, then $\dot{\Lambda}_t = \mathcal{L} \Lambda_t$, with $\mathcal{L}$ being a GKLS generator given by		
	\begin{equation}\label{}
		\mathcal{L}(\rho) = -\ii \frac \Omega 2  [\sigma_z,\rho] + \frac 12 \gamma (\sigma_z \rho \sigma_z - \rho) .
	\end{equation}
	
	This proposition imposes strict conditions upon the environment state $\rho_{\rm B}$ (or, in the case of a pure state, $\psi_{\rm B}$) and the blocks $H_0$, $H_1$ in order for the corresponding dynamics to be CP-divisible. Remarkably, since $\varphi(t)$ depends on the interplay between two distinct operators, it is possible, in principle, to obtain a purely exponential decay (and therefore semigroup dynamics) with a positive Hamiltonian, $\mathbf{H}\geq0$, contrarily to what happens to the survival probability of states in a closed quantum system, which cannot decay exponentially unless the corresponding Hamiltonian has a doubly unbounded spectrum. We shall come back on this point later on.
	
	Before proceeding with the analysis of quantum regression, let us discuss some examples.
	
	\begin{Example}[Shallow-pocket Hamiltonians]\label{ex:1}
		A remarkably simple example of Hamiltonian $\mathbf{H}$ yielding a purely exponential dephasing is the shallow-pocket model (cf.~\cite{Kavan-PRX}), which we briefly discuss here in a slightly more general form. Take $H_0=-H_1\equiv \frac{1}{2}H$ in Eq.~\eqref{eq:h}; then the dephasing function~\eqref{eq:dephasing} simply reads
		\begin{equation}
			\varphi(t)=\Braket{\psi_{\rm B}|\e^{-\ii tH}\psi_{\rm B}},
		\end{equation}
		so that $\varphi(t)$ reduces to the \textit{survival amplitude} of the state $\psi_{\rm B}$ under the evolution generated by $H$. In such a case, $\varphi(t)$ is known to be a continuous function of positive type~\cite{stewart,rudin,loomis,reedsimon}, and is uniquely identified as the Fourier transform of the \textit{spectral measure} associated with $H$ and $\psi_{\rm B}$: every dephasing function of positive type can be reproduced this way. This includes purely exponential decay: $\varphi(t)$ is exponential if and only if the spectral measure is a Cauchy distribution. For example, by choosing $\hilbb=L^2(\mathbb{R})$ (i.e. the space of square-integrable functions on the real line), $H$ as the position operator on it, and
		\begin{equation}
			|\psi_{\rm B}(x)|^2=\frac{\gamma}{2\pi}\frac{1}{(x-\Omega)^2+\frac{\gamma^2}{4}},
		\end{equation}
		then $\varphi(t)$ is as in Eq.~\eqref{eq:exp}.
	\end{Example}
	
	\begin{Example}[Generalizations of shallow-pocket Hamiltonians]\label{ex:1bis}
		A generalization of the models above can be obtained by assuming $H_0=f_0(H)$ and $H_1=f_1(H)$ for two real-valued functions $f_0(x),f_1(x)$, with $H$ being a self-adjoint operator on $\hilbb$ (the case of Example~\ref{ex:1} is recovered by setting $f_0(x)=-f_1(x)=x/2$). In such a case, again we have
		\begin{equation}
			\varphi(t)=\Braket{\psi_{\rm B}|\e^{-\ii t \left(f_0(H)-f_1(H)\right)}\psi_{\rm B}}.
		\end{equation}
		Such a construction was used in Ref.~\cite{positive} to show that dephasing channels with exponential $\varphi(t)$ (thus satisfying the semigroup property) can be obtained even if $\textbf{H}$ is a positive Hamiltonian, defusing the standard argument that, for a \textit{closed} system, prohibits exponential decay at large times unless the Hamiltonian is doubly unbounded~\cite{Khalfin57,Khalfin58}. To this purpose, consider a doubly unbounded Hamiltonian $H$ and take
		\begin{equation}
			f_0(x)=\max\{0,x\},\qquad f_1(x)=-\min\{0,x\};
		\end{equation}
		then both $H_0=f_0(H)$ and $H_1=f_1(H)$ are positive Hamiltonians, and so is $\mathbf{H}$, but $f_0(H)-f_1(H)=H$ is doubly unbounded and therefore $\varphi(t)=\Braket{\psi_{\rm B}|\e^{-\ii t H}\psi_{\rm B}}$ \textit{can} decay exponentially. In particular, choosing $H$ and $\psi_{\rm B}$ as in Example~\ref{ex:1}, we obtain a purely exponential dephasing at all times (and, therefore, pure semigroup dynamics) despite $\mathbf{H}$ being positive, as pointed out in Ref.~\cite{positive}.
	\end{Example}
	
	\begin{Example}[The dephasing-type spin-boson Hamiltonian]\label{ex:2}
		Let $\hilbb$ be the symmetric Fock space associated with a boson bath whose energies cover a continuous subset of the real line. Define
		\begin{equation}
			\mathbf{H}=H_{\rm S}\otimes\oper_{\rm B}+\oper_{\rm S}\otimes H_{\rm B}+\sigma_z\otimes\left(b(f)+b^\dag(f)\right),
		\end{equation}
		with $H_{\rm S}=\omega_0\ketbra{0}{0}+\omega_1\ketbra{1}{1}$ being the free Hamiltonian of a qubit, $H_{\rm B}$ the free Hamiltonian of the boson bath, $\sigma_z=\ketbra{0}{0}-\ketbra{1}{1}$, and $b(f)$, $b^\dag(f)$ the annihilation and creation operators associated with a square-integrable function $f$, the \textit{form factor} of the model, on (a subset of) the real line. As is customary in physics, we may write them via the formal expressions
		\begin{equation}
			H_{\rm B}=\int\mathrm{d}\omega\;\omega\,b^\dag_\omega b_\omega,\qquad b(f)=\int\mathrm{d}\omega\;f(\omega)^*b_\omega,
		\end{equation}
		with $b_\omega,b_\omega^\dag$ satisfying the standard commutation rules: $[b_\omega,b_{\omega'}]=0$ and $[b_\omega,b_{\omega'}^\dag]=\delta(\omega-\omega')$. Finally, $\sigma_z=\ketbra{0}{0}-\ketbra{1}{1}$.
		
		This Hamiltonian belongs to the class of generalized spin-boson (GBS) models~\cite{arai1,arai2,arai3}, and is thus self-adjoint on $\mathcal{D}(\mathbf{H})=\hilbs\otimes\mathcal{D}(H_{\rm B})\simeq\mathcal{D}(H_{\rm B})\oplus\mathcal{D}(H_{\rm B})$; it describes a two-level system interacting with a continuous boson field in a way which, while not affecting the atom population, induces decoherence. Indeed, one immediately shows that $\mathbf{H}$ admits the decomposition~\eqref{eq:h}, with
		\begin{equation}
			H_0=\omega_0+H_{\rm B}+\left(b(f)+b^\dag(f)\right),\qquad H_1=\omega_1+H_{\rm B}-\left(b(f)+b^\dag(f)\right),
		\end{equation}
		with domain $\mathcal{D}(H_0)=\mathcal{D}(H_1)=\mathcal{D}(H_{\rm B})$. Consequently, choosing any initial state of the boson environment and tracing out the bosonic degrees of freedom, a dephasing channel is obtained.
		
		Notably, in such a case $H_0$ and $H_1$ do \textit{not} commute unless the spin-boson coupling is trivial, i.e. $f\equiv0$. As we will see later, this will prove to be crucial for quantum regression.
	\end{Example}
	
	\subsection{Hierarchy of conditions for quantum regression}
	
	We shall now investigate quantum regression for the qubit dephasing dynamics. Recall that, given an open quantum system represented by a Hamiltonian $\mathbf{H}$ on a Hilbert space $\hilb=\hilbs\otimes\hilbb$, we say that the couple\footnote{Note that, when the state $\rho_{\rm B}$ is fixed \textit{ab initio}, with a slight abuse of notation we will simply say that ``$\H$ satisfies quantum regression''.} $(\mathbf{H},\rho_{\rm B})$, with $\rho_{\rm B}$ being a state of the environment, satisfies the \textit{quantum regression hypothesis} (or simply quantum regression) if the following conditions hold~\cite{Lax}: for every system state $\rho$, every couple of families $\{X_0,\ldots,X_n\}$, $\{Y_0,\ldots,Y_n\}\subset\mathcal{B}(\hilbs)$, and all $t_n\geq t_{n-1}\geq\ldots\geq t_0\geq0$, we must have
	\begin{equation}\label{eq:regr}
		\tr_{\rm SB}\Bigl[\tilde{\mathcal{E}}_n\,\mathcal{U}_{t_n-t_{n-1}}\cdots\tilde{\mathcal{E}}_0\,\mathcal{U}_{t_0}\left(\rho\otimes\rho_{\rm B}\right)\Bigr]=\tr_{\rm S}\left[\mathcal{E}_n\,\Lambda_{t_n-t_{n-1}}\cdots\mathcal{E}_0\,\Lambda_{t_0}\left(\rho\right)\right],
	\end{equation}
	where $\mathcal{U}_t=\e^{-\ii t\mathbf{H}}(\cdot)\e^{\ii t\mathbf{H}}$, $\Lambda_t=\tr_{\rm B}\left[\mathcal{U}_t\left(\,\cdot\,\otimes\rho_{\rm B}\right)\right]$, and
	\begin{equation}
		\tilde{\mathcal{E}}_k=(X_k\otimes\oper_{\rm B})(\cdot)(Y_k\otimes\oper_{\rm B}),\qquad\mathcal{E}_k=X_k(\cdot)Y_k.
	\end{equation}
	Operationally, Eq.~\eqref{eq:regr} can be interpreted as follows: all multi-time correlation functions that describe a sequence of interventions on the system alone at times $t_0,t_1,\ldots,t_n$, with the system and the environment evolving freely between consecutive interventions, can be expressed in terms of the reduced dynamics alone. This is a highly nontrivial property: interventions can, indeed, reveal profound differences between open quantum systems whose reduced dynamics is nevertheless exactly the same.
	
	In general, Eq.~\eqref{eq:regr} must be verified for all choices of interventions and all system states $\rho$. Remarkably, for open quantum systems associated with global Hamiltonians $\mathbf{H}$ as in Eq.~\eqref{eq:h}, it is instead possible to translate Eq.~\eqref{eq:regr} in an explicit form which \textit{only} depends on $H_0$, $H_1$, and $\rho_{\rm B}$.
	\begin{Proposition}\label{prop:regression}
		Given a Hamiltonian $\H$ in the form~\eqref{eq:h} and an environment state $\rho_{\rm B}$, the couple $(\H,\rho_{\rm B})$ satisfies quantum regression if and only if, for all $t_n\geq t_{n-1}\geq\ldots\geq t_0\geq0$, the following equality holds:
		\begin{equation}\label{eq:condition}
			\tr\,\Bigl[\e^{-\ii \Delta t_n H_{j_n}}\cdots\e^{-\ii \Delta t_0 H_{j_0}}\rho_{\rm B}\,\e^{\ii\Delta t_0H_{\ell_0}}\cdots\e^{\ii \Delta t_nH_{\ell_n}}\Bigr]=\prod_{k=0}^n\tr\,\bigl[\e^{-\ii\Delta t_kH_{j_k}}\rho_{\rm B}\,\e^{\ii\Delta t_k H_{\ell_k}}\bigr]
		\end{equation}
		for all $j_0,\ell_0,\ldots,j_n,\ell_n\in\{0,1\}$, where $\Delta t_0=t_0$ and $\Delta t_k=t_k-t_{k-1}$ for $k\geq1$.
	\end{Proposition}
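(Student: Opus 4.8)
The plan is to expand both sides of the regression identity~\eqref{eq:regr} in the system basis $\{\ket{0},\ket{1}\}$ and to show that each reduces to the \emph{same} multilinear sum over strings of system indices, the two sides differing only in a single bath-trace factor: on the left-hand side this factor is the \emph{joint} trace appearing on the left of~\eqref{eq:condition}, while on the right-hand side it is the \emph{product} of two-point traces appearing on its right. Equality for all interventions and all states then becomes equivalent to the termwise identity~\eqref{eq:condition}.

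First I would expand the left-hand side. Using the block-diagonal form $\e^{-\ii t\H}=\sum_{j}\ketbra{j}{j}\otimes\e^{-\ii t H_j}$, each free-evolution segment $\mathcal{U}_{\Delta t_k}$ acts on the bath through the single block $H_{j_k}$ selected by the system index $j_k$ carried along that segment, while each intervention $\tilde{\mathcal{E}}_k$ relabels the system index on the ket side, $j_k\mapsto j_{k+1}$, with weight $(X_k)_{j_{k+1}j_k}$, and on the bra side, $\ell_k\mapsto\ell_{k+1}$, with weight $(Y_k)_{\ell_k\ell_{k+1}}$. Tracking these indices through the product, the bath degrees of freedom assemble into a single trace
\begin{equation}
\tr\Bigl[\e^{-\ii\Delta t_n H_{j_n}}\cdots\e^{-\ii\Delta t_0 H_{j_0}}\,\rho_{\rm B}\,\e^{\ii\Delta t_0 H_{\ell_0}}\cdots\e^{\ii\Delta t_n H_{\ell_n}}\Bigr],
\end{equation}
weighted by $\rho_{j_0\ell_0}\prod_{k=0}^n(X_k)_{j_{k+1}j_k}(Y_k)_{\ell_k\ell_{k+1}}$, with the final system trace $\tr_{\rm S}$ imposing $j_{n+1}=\ell_{n+1}$. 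The right-hand side I would expand using the dephasing action $\Lambda_t(\ketbra{a}{b})=\varphi_{ab}(t)\ketbra{a}{b}$, with $\varphi_{ab}(t)=\tr[\e^{-\ii t H_a}\rho_{\rm B}\,\e^{\ii t H_b}]$; the identical index bookkeeping reproduces exactly the same weights and the same constraint $j_{n+1}=\ell_{n+1}$, but now the bath factor is $\prod_{k=0}^n\varphi_{j_k\ell_k}(\Delta t_k)$, i.e. the right-hand side of~\eqref{eq:condition}.

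The equivalence then follows by comparing the two multilinear sums. Sufficiency is immediate: if~\eqref{eq:condition} holds for every string of indices, the two bath factors coincide term by term and the sums are equal for all $\rho$, $X_k$, $Y_k$. For necessity I would use that both sides of~\eqref{eq:regr} are linear in $\rho$ and in each $X_k,Y_k$ separately; since the $X_k,Y_k$ range over all of $\mathcal{B}(\hilbs)$ and the states span $\mathcal{B}(\hilbs)$, it suffices to test the identity on matrix units. Choosing $\rho=\ketbra{j_0}{\ell_0}$, $X_k=\ketbra{j_{k+1}}{j_k}$ and $Y_k=\ketbra{\ell_k}{\ell_{k+1}}$ (with $j_{n+1}=\ell_{n+1}$ fixed arbitrarily) collapses each sum to the single contribution of the prescribed string, which isolates precisely~\eqref{eq:condition} for that string; since the indices $j_0,\ell_0,\ldots,j_n,\ell_n$ are arbitrary, this yields the full hierarchy.

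I expect the main obstacle to be organizational rather than conceptual: the careful combinatorial bookkeeping needed to verify that the bath exponentials assemble in the correct order on the two sides of $\rho_{\rm B}$ --- the factors $H_{j_k}$ nested from $k=n$ outermost inward toward $\rho_{\rm B}$, and the $H_{\ell_k}$ running from $\rho_{\rm B}$ outward up to $k=n$ --- and that the scalar weights factor as $\rho_{j_0\ell_0}\prod_k(X_k)_{j_{k+1}j_k}(Y_k)_{\ell_k\ell_{k+1}}$ identically on both sides, so that the two representations genuinely differ only in the single bath-trace factor. Once this matching is established, the reduction to~\eqref{eq:condition} via matrix-unit interventions is routine.
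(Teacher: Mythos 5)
Your proposal is correct and follows essentially the same route as the paper's proof: expand both sides of Eq.~\eqref{eq:regr} using the block-diagonal form of $\e^{-\ii t\mathbf{H}}$, observe that the resulting multilinear sums share the same system-operator weights and differ only in the bath factor (joint trace versus product of two-point traces), and conclude by isolating individual index strings. The only difference is that you make the necessity direction explicit via matrix-unit interventions and linearity in $\rho$, whereas the paper leaves that final step implicit.
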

	\begin{proof}
		For any Hamiltonian as in Eq.~\eqref{eq:h}, we have
		\begin{eqnarray}
			\mathcal{U}_t&=&\sum_{j,\ell=0,1}\ketbra{j}{j}\cdot\ketbra{\ell}{\ell}\otimes\e^{-\ii tH_{j}}(\cdot)\e^{\ii tH_{\ell}},\\
			\Lambda_t&=&\sum_{j,\ell=0,1}\,\tr\left[\e^{-\ii tH_{j}}\rho_{\rm B}\,\e^{\ii tH_\ell}\right]\ketbra{j}{j}\cdot\ketbra{\ell}{\ell},
		\end{eqnarray}
		so that\small
		\begin{eqnarray}
			\tilde{\mathcal{E}}_n\,\mathcal{U}_{t_n-t_{n-1}}\!\cdots\!\tilde{\mathcal{E}}_0\,\mathcal{U}_{t_0}\left(\rho\otimes\rho_{\rm B}\,\right)&=&\sum_{j_n,\ell_n}\!\cdots\!\sum_{j_0,\ell_0}\Bigl[X_n\ketbra{j_n}{j_n}X_{n-1}\!\cdots\! X_0\ketbra{j_0}{j_0}\rho\ketbra{\ell_0}{\ell_0}Y_0\!\cdots\! Y_{n-1}\ketbra{\ell_n}{\ell_n}Y_n\nonumber\\
			&&\otimes \left(\e^{-\ii \Delta t_n H_{j_n}}\cdots\e^{-\ii \Delta t_0 H_{j_0}}\rho_{\rm B}\,\e^{\ii \Delta t_0H_{\ell_0}}\cdots\e^{\ii \Delta t_n H_{\ell_n}}\right)\Bigr]
		\end{eqnarray}\normalsize
		and
		\begin{eqnarray}
			\mathcal{E}_n\,\Lambda_{t_n-t_{n-1}}\cdots\mathcal{E}_0\,\Lambda_{t_0}\left(\rho\right)&=&\sum_{j_n,\ell_n}\cdots\sum_{j_0,\ell_0}\tr\left[\e^{-\ii\Delta t_0H_{j_0}}\rho_{\rm B}\,\e^{\ii\Delta t_0H_{\ell_0}}\right]\cdots\tr\left[\e^{-\ii\Delta t_nH_{j_n}}\rho_{\rm B}\,\e^{\ii\Delta t_0H_{\ell_n}}\right]\nonumber\\
			&&\times X_n\ketbra{j_n}{j_n}X_{n-1}\cdots X_0\ketbra{j_0}{j_0}\rho\ketbra{\ell_0}{\ell_0}Y_0\cdots Y_{n-1}\ketbra{\ell_n}{\ell_n}Y_n,
		\end{eqnarray}
		where all indices take the values $\{0,1\}$. Taking the trace of the two equations above, it is therefore clear that the two quantities are always equal if and only if Eq.~\eqref{eq:condition} holds.
	\end{proof}
	Some considerations are in order. First of all, notice that, defining $U_{t}^{j,\ell}=\e^{-\ii t H_{j}}(\cdot)\e^{\ii tH_\ell}$, Eq.~\eqref{eq:condition} can be written more compactly as
	\begin{equation}\label{eq:div}
		\tr\left[\mathcal{T}\!\prod_{k=0}^n U_{\Delta t_k}^{j_k,\ell_k}(\rho_{\rm B})\right]=\prod_{k=0}^n\tr\left[U_{\Delta t_k}^{j_k,\ell_k}(\rho_{\rm B})\right],
	\end{equation}
	where the $\mathcal{T}$ signals that the product of operators at the left-hand side is time-ordered.
	
	Let us examine Eq.~\eqref{eq:condition} for the smallest values of $n$. For $n=1$ the property is trivial; consequently, the first nontrivial conditions for regression are obtained for $n=2$, namely,
	\begin{equation}\label{eq:condition-n2}
		\tr\left[\e^{-\ii\Delta t_1H_{j_1}}\e^{-\ii\Delta t_0H_{j_0}}\rho_{\rm B}\,\e^{\ii\Delta t_0H_{\ell_0}}\e^{\ii\Delta t_1H_{\ell_1}}\right]=\tr\left[\e^{-\ii\Delta t_0H_{j_0}}\rho_{\rm B}\,\e^{\ii\Delta t_0H_{\ell_0}}\right]\tr\left[\e^{-\ii\Delta t_1H_{j_1}}\rho_{\rm B}\,\e^{\ii\Delta t_1H_{\ell_1}}\right],
	\end{equation}
	for all $j_0,\ell_0,j_1,\ell_1\in\{0,1\}$. These are $16$ equations; removing the trivial identities and repeated equalities, we are left with four independent constraints:\small
	\begin{eqnarray}
		\tr\left[\e^{-\ii t_1H_0}\rho_{\rm B}\,\e^{\ii t_1H_1}\right]&=&\tr\left[\e^{-\ii (t_1-t_0)H_0}\rho_{\rm B}\,\e^{\ii (t_1-t_0)H_1}\right]\tr\left[\e^{-\ii t_0H_0}\rho_{\rm B}\,\e^{\ii t_0H_1}\right];\label{eq:condition-n2-1}\\
		\tr\left[\e^{-\ii(t_1-t_0)H_0}\e^{-\ii t_0H_1}\rho_{\rm B}\,\e^{\ii t_0H_0}\e^{\ii(t_1-t_0)H_1}\right]&=&\tr\left[\e^{-\ii(t_1-t_0)H_0}\rho_{\rm B}\,\e^{\ii(t_1-t_0)H_1}\right]\tr\left[\e^{-\ii t_0H_1}\rho_{\rm B}\,\e^{\ii t_0H_0}\right];\qquad\label{eq:condition-n2-2}\\
		\tr\left[\e^{-\ii t_1H_0}\rho_{\rm B}\,\e^{\ii t_0 H_0}\e^{\ii(t_1-t_0)H_1}\right]&=&\tr\left[\e^{-\ii(t_1-t_0)H_0}\rho_{\rm B}\,\e^{\ii(t_1-t_0)H_1}\right];\label{eq:condition-n2-3}\\
		\tr\left[\e^{-\ii t_1H_1}\rho_{\rm B}\,\e^{\ii t_0 H_1}\e^{\ii(t_1-t_0)H_0}\right]&=&\tr\left[\e^{-\ii(t_1-t_0)H_1}\rho_{\rm B}\,\e^{\ii(t_1-t_0)H_0}\right].\label{eq:condition-n2-4}
	\end{eqnarray}\normalsize
	In particular, recalling the definition~\eqref{eq:dephasing} of the dephasing function, Eq.~\eqref{eq:condition-n2-1} corresponds precisely to $\varphi(t_1)=\varphi(t_1-t_0)\varphi(t_0)$ for all $t_1\geq t_0$, that is, the semigroup property. This implies that
	\begin{Corollary}
		A \textit{necessary} condition for $(\H,\rho_{\rm B})$ to satisfy quantum regression is the dephasing function $\varphi(t)$ satisfying Eq.~\eqref{eq:exp}, i.e., that the corresponding channel $\Lambda_t$ is a semigroup at all times.
	\end{Corollary}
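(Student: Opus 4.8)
The plan is to extract the exponential form of $\varphi$ from the lowest nontrivial level of the regression hierarchy, leaning on the machinery already in place. First I would invoke Prop.~\ref{prop:regression}: if $(\H,\rho_{\rm B})$ satisfies quantum regression, then Eq.~\eqref{eq:condition} holds for every $n$; in particular it holds for $n=2$, which is precisely the system~\eqref{eq:condition-n2-1}--\eqref{eq:condition-n2-4} of four independent constraints. Among these, the first one, Eq.~\eqref{eq:condition-n2-1}, is—upon recalling the definition~\eqref{eq:dephasing} of the dephasing function—exactly the statement
\begin{equation}
\varphi(t_1)=\varphi(t_1-t_0)\,\varphi(t_0),\qquad t_1\geq t_0\geq0.
\end{equation}
Thus quantum regression already forces $\varphi$ to obey this multiplicative relation at the two-time level, before any higher $n$ is even considered.

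Next I would recognise that, writing $s=t_0$ and $\tau=t_1-t_0$, this is the functional equation $\varphi(\tau+s)=\varphi(\tau)\,\varphi(s)$ for all $\tau,s\geq0$. Together with the two general properties of $\varphi$ recorded earlier—namely $\varphi(0)=1$ and continuity of $t\mapsto\varphi(t)$—this is the classical Cauchy exponential equation, whose only continuous solutions are $\varphi(t)=\e^{ct}$ for some constant $c\in\mathbb{C}$. Setting $c=-(\ii\Omega+\gamma/2)$ yields Eq.~\eqref{eq:exp}; the bound $|\varphi(t)|\leq1$, which likewise holds for any dephasing function, forces $\Re c\leq0$, i.e. $\gamma\geq0$, while $\Omega=-\Im c\in\mathbb{R}$ is unconstrained. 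This reproduces exactly the exponential form~\eqref{eq:exp}.

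Even more economically, I would observe that the relation $\varphi(t_1)=\varphi(t_1-t_0)\varphi(t_0)$ is nothing but the semigroup property of the channel, since $\Lambda_t$ is entirely determined by $\varphi(t)$ through Eq.~\eqref{eq:channel}; one may therefore appeal directly to the final statement of Prop.~\ref{prop:semigroup}, which asserts that $\Lambda_t=\Lambda_{t-s}\Lambda_s$ for all $t\geq s\geq0$ if and only if $\varphi$ has the form~\eqref{eq:exp}. Either route closes the argument. I do not expect a genuine obstacle: all the analytic content is packaged into Prop.~\ref{prop:regression} (which reduces regression to the hierarchy) and Prop.~\ref{prop:semigroup} (which solves the resulting functional equation). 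The only point needing a moment's care is the bookkeeping that isolates Eq.~\eqref{eq:condition-n2-1}, among the sixteen $n=2$ equations, as the one encoding the semigroup law, and the check that the constraints $\gamma\geq0$, $\Omega\in\mathbb{R}$ emerge correctly from $|\varphi|\leq1$.
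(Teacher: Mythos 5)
Your proposal is correct and follows essentially the same route as the paper: the text preceding the Corollary isolates Eq.~\eqref{eq:condition-n2-1} from the $n=2$ level of the hierarchy of Prop.~\ref{prop:regression}, identifies it with the semigroup law $\varphi(t_1)=\varphi(t_1-t_0)\varphi(t_0)$, and then appeals to the final statement of Prop.~\ref{prop:semigroup} (whose proof, via Prop.~\ref{prop:semigroup-multi}, is exactly your continuity-plus-$|\varphi|\leq1$ solution of the Cauchy functional equation) to obtain Eq.~\eqref{eq:exp}.
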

	Therefore, for this class of models, the validity of the semigroup property $\Lambda_{t_1}=\Lambda_{t_1-t_0}\Lambda_{t_0}$ is a necessary requirement for the quantum regression hypothesis to hold. More concretely, this also means that, in order for regression to hold at least approximately at large times, $\varphi(t)$ must behave exponentially at large times.
	
	We remark that the semigroup property, while necessary, is by no means sufficient already at $n=2$; we shall indeed examine a vast class of models for which regression always fails, even with the semigroup property being satisfied.
	
	\subsection{Quantum regression and commutative dephasing}
	
	Let us analyze the case in which the operators $H_0$ and $H_1$ \textit{commute}. Two unbounded operators $H_0,H_1$ are said to (strongly) commute if their associated spectral projections commute, or, equivalently, if~\cite{reedsimon}
	\begin{equation}
		\e^{\-\ii tH_0}\e^{\ii sH_1}=\e^{\ii sH_1}\e^{\ii tH_0}\qquad\text{for all }t,s\in\mathbb{R},
	\end{equation}
	that is, if their corresponding unitary evolution groups, evaluated at any couple of times $t,s$, commute. In such a case, the order of operators in the left-hand side of Eq.~\eqref{eq:condition}, and in particular Eq.~\eqref{eq:condition-n2}, is therefore irrelevant (``the $\mathcal{T}$ can be removed'' from Eq.~\eqref{eq:div}). In particular, by using commutativity and the cyclic properties of the trace, the two conditions~\eqref{eq:condition-n2-3}--\eqref{eq:condition-n2-4} become trivial, while Eq.~\eqref{eq:condition-n2-2} can be written as
	\begin{equation}
		\tr\left[\e^{-\ii(t_1-2t_0)H_0}\rho_{\rm B}\,\e^{\ii(t_1-2t_0)H_1}\right]=\tr\left[\e^{-\ii(t_1-t_0)H_0}\rho_{\rm B}\,\e^{\ii(t_1-t_0)H_1}\right]\tr\left[\e^{-\ii t_0H_1}\rho_{\rm B}\,\e^{\ii t_0H_0}\right]\label{eq:condition-n2-2comm}
	\end{equation}
	for all $t_1\geq t_0\geq0$. But this condition alone has a fundamental consequence. Take Eq.~\eqref{eq:condition-n2-2comm} with $t_1=2t_0\equiv 2t$. Then we must have, for all $t\geq0$,
	\begin{equation}
		1=\tr\left[\e^{-\ii tH_0}\rho_{\rm B}\,\e^{\ii tH_1}\right]\tr\left[\e^{-\ii tH_1}\rho_{\rm B}\,\e^{\ii tH_0}\right]=|\varphi(t)|^2.
	\end{equation}
	We have shown the following
	\begin{Corollary}\label{coroll}
		Consider a Hamiltonian $\H$ in the form~\eqref{eq:h}, with $H_0$, $H_1$ being \textit{commuting} self-adjoint operators, and suppose that the couple $(\H,\rho_{\rm B})$ satisfies quantum regression. Then $|\varphi(t)|^2=1$, that is, the dephasing is trivial.
	\end{Corollary}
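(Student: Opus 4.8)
The plan is to deduce the conclusion entirely from the second level ($n=2$) of the regression hierarchy of Prop.~\ref{prop:regression}, exploiting strong commutativity of $H_0$ and $H_1$ to collapse all but one of the four independent constraints \eqref{eq:condition-n2-1}--\eqref{eq:condition-n2-4}. The starting observation is that, when $H_0$ and $H_1$ strongly commute, every pair among the unitaries $\e^{\pm\ii tH_0}$, $\e^{\pm\ii sH_1}$ commutes, so inside each trace the ordering of the propagators is immaterial (the symbol $\mathcal{T}$ in \eqref{eq:div} may be dropped), and the only obstruction to rearrangement is the state $\rho_{\rm B}$, which I can nonetheless move around using cyclicity of the trace.

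First I would check that, under commutativity, the constraints \eqref{eq:condition-n2-3} and \eqref{eq:condition-n2-4} become identities: on each left-hand side the two $H_0$-propagators (resp.\ $H_1$-propagators) flanking $\rho_{\rm B}$ can be brought together by cyclicity and merged, after which the expression coincides verbatim with the corresponding right-hand side. Thus the single productive $n=2$ constraint is \eqref{eq:condition-n2-2}. Here I would rearrange the four exponentials on its left-hand side, combining $\e^{-\ii(t_1-t_0)H_0}$ with $\e^{\ii t_0H_0}$ and $\e^{-\ii t_0H_1}$ with $\e^{\ii(t_1-t_0)H_1}$ via commutativity and cyclicity, reducing it to the single dephasing-type trace $\tr\left[\e^{-\ii(t_1-2t_0)H_0}\rho_{\rm B}\,\e^{\ii(t_1-2t_0)H_1}\right]$, so that \eqref{eq:condition-n2-2} takes the form \eqref{eq:condition-n2-2comm}, with the right-hand side already a product of two dephasing traces.

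The decisive step is then the specialisation $t_1=2t_0\equiv2t$. At this value the argument $t_1-2t_0$ of the left-hand trace vanishes, the propagators collapse to the identity, and the left-hand side reduces to $\tr[\rho_{\rm B}]=1$; meanwhile the two factors on the right become $\varphi(t)$ and its conjugate $\varphi(t)^*$, whose product is $|\varphi(t)|^2$. Hence $|\varphi(t)|^2=1$ for all $t\geq0$, which is precisely the asserted triviality of the dephasing.

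I expect the main difficulty to be bookkeeping rather than analysis: one must confirm that \eqref{eq:condition-n2-2} is genuinely the only surviving constraint (the other three self-destruct under commutativity) and then spot the exact specialisation $t_1=2t_0$ that simultaneously trivialises the left-hand side and manufactures $|\varphi|^2$ on the right. Since all manipulations involve only bounded unitary groups, strong commutativity in the sense recalled before the statement guarantees that the reorderings and cyclic permutations remain legitimate even when $H_0,H_1$ are unbounded, so no domain subtleties arise.
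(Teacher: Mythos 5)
Your proposal is correct and follows essentially the same route as the paper: commutativity trivialises conditions~\eqref{eq:condition-n2-3}--\eqref{eq:condition-n2-4}, reduces~\eqref{eq:condition-n2-2} to~\eqref{eq:condition-n2-2comm}, and the specialisation $t_1=2t_0$ yields $1=\varphi(t)\varphi(t)^*=|\varphi(t)|^2$. No gaps.
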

	Summing up: \textit{in the commutative case, a nontrivial dephasing is incompatible with quantum regression}, even if $\varphi(t)$ is exponential and therefore the semigroup property is satisfied at any time. This includes, in particular, the shallow-pocket model and all its generalizations introduced in Examples~\ref{ex:1}--\ref{ex:1bis}. From an operational point of view, this means that, in such models, it is possible to reveal nontrivial correlations already at $n=2$, that is, via a single intervention. In fact, in the case of the shallow-pocket model, it suffices to consider the \textit{swap} intervention:
	\begin{equation}
		\mathcal{E}=\sigma_x(\cdot)\sigma_x,
	\end{equation}
	as already observed in Refs.~\cite{kavan3,Kavan-PRX}: the swap reverses the exponential decay, revealing a truly non-Markovian feature despite the free dynamics satisfying the semigroup property (and thus, \textit{a fortiori}, CP-divisible). The statement above shows that, in fact, the failure of quantum regression in such cases can be regarded as a direct byproduct of the commutativity between the two blocks of the Hamiltonian $\mathbf{H}$.
	
	This argument does not apply when $H_0$ and $H_1$ fail to commute. This is indeed the case for the dephasing-type spin-boson model introduced in Example~\ref{ex:2}: in such a case, a nontrivial dephasing phenomenon satisfying quantum regression is therefore possible, as we will now show.
	
	\section{Dephasing-type spin-boson models} \label{SEC-III}
	
	Let us consider again the dephasing-type spin-boson model introduced in Example~\ref{ex:2}; hereafter we will set the environment state as $\rho_{\rm B}=\ketbra{\vac}{\vac}$, with $\ket{\vac}$ being the \textit{vacuum} state of the bath, formally defined via $b_\omega\ket{\vac}=0$ for all values of $\omega$. Clearly, the reduced dynamics on the spin system is a dephasing channel $\Lambda_t$ with dephasing function
	\begin{equation}\label{eq:dephasing-sb}
		\varphi(t)=\Braket{\vac|\e^{\ii tH_1}\e^{-\ii tH_0}|\vac},
	\end{equation}
	with $\varphi(t)$ crucially depending on the choice of the coupling function $f(\omega)$.
	
	\subsection{Evolution group and dephasing function}
	
	The first step in order to analyze the reduced dynamics of such models is to compute the dephasing function~\eqref{eq:dephasing-sb}. To accomplish this goal, we will mainly follow the approach of~\cite{Alicki-D} and resort to Weyl operators, whose main properties will be briefly recalled here.
	
	Given a square-integrable function $g$, the \textit{Weyl operator}
	\begin{equation}
		W(g)=\exp\left\{b^\dag(g)-b(g)\right\},
	\end{equation}
	with the exponentiation to be interpreted in the sense of the spectral theorem, is well-defined, unitary, and satisfies $W(-g)=W^\dag(g)$ (cf.\ ~\cite[Theorem X.41]{reedsimon}); besides, for all finite-particle states, i.e. all states in the form $\ket{\Psi}=b^\dag(g_1)\cdots b^\dag(g_n)\ket{\vac}$, the exponentiation can be interpreted via a series expansion. Such operators provide a representation of the Weyl algebra, that is, for all square-integrable functions $g,h$,
	\begin{equation}\label{eq:weyl_composition}
		W(g)W(h)=\exp\left\{-\ii\Im\int\mathrm{d}\omega\:g(\omega)^*h(\omega)\right\}W(g+h);
	\end{equation}
	furthermore, a direct series expansion shows
	\begin{equation}\label{eq:weyl_average}
		\Braket{\vac|W(g)|\vac}=\exp\left\{-\frac{1}{2}\int\mathrm{d}\omega\:|g(\omega)|^2\right\}.
	\end{equation}
	Finally, the equality
	\begin{equation}\label{eq:weyl-rotation}
		\e^{-\ii tH_{\rm B}}W(g)\e^{\ii tH_{\rm B}}=W(g_t),
	\end{equation}
	where $g_t(\omega)=\e^{-\ii\omega t}g(\omega)$, holds as a particular case of~\cite[Theorem X.41(e)]{reedsimon}.
	
	Let us apply this formalism to compute the dephasing function. We shall momentarily tighten our assumptions about the coupling function $f(\omega)$, and require the following two inequalities to hold:
	\begin{equation}\label{eq:constraint}
		\int\mathrm{d}\omega\;\frac{|f(\omega)|^2}{\omega}<\infty,\qquad 	\int\mathrm{d}\omega\;\frac{|f(\omega)|^2}{\omega^2}<\infty.
	\end{equation}
	The second one allows us to define the Weyl operator $\W{\frac{f}{\omega}}$. Now, a direct computation shows that the equalities
	\begin{eqnarray}
		\W{\frac{f}{\omega}}H_0\Wdag{\frac{f}{\omega}}&=&\tilde{\omega}_0+H_{\rm B};\\
		\Wdag{\frac{f}{\omega}}H_1\W{\frac{f}{\omega}}&=&\tilde{\omega}_1+H_{\rm B},
	\end{eqnarray}
	where
	\begin{equation}
		\tilde{\omega}_{0}=\omega_{0}-\int\mathrm{d}\omega\;\frac{|f(\omega)|^2}{\omega},\qquad	\tilde{\omega}_{1}=\omega_{1}-\int\mathrm{d}\omega\;\frac{|f(\omega)|^2}{\omega},
	\end{equation}
	hold on a dense subspace of $\hilbb$: both $H_0$ and $H_1$ are thus unitarily equivalent, up to an additional shift to the energies $\omega_0,\omega_1$, to their decoupled counterparts. Consequently, we have
	\begin{eqnarray}
		\label{eq:evol_h0}\e^{-\ii tH_{0}}&=&\e^{-\ii t\tilde{\omega}_{0}}\,\W{\frac{f}{\omega}}\e^{-\ii tH_{\rm B}}\Wdag{\frac{f}{\omega}},\\
		\label{eq:evol_h1}\e^{-\ii tH_{1}}&=&\e^{-\ii t\tilde{\omega}_{1}}\,\Wdag{\frac{f}{\omega}}\e^{-\ii tH_{\rm B}}\W{\frac{f}{\omega}}.
	\end{eqnarray}
	These equations, together with the general properties~\eqref{eq:weyl_composition}--\eqref{eq:weyl-rotation} of Weyl operators, will enable us to compute the dephasing function $\varphi(t)$.
	\begin{Proposition}\label{prop:dephasing}
		The dephasing function $\varphi(t)$ in Eq.~\eqref{eq:dephasing-sb} is given by
		\begin{equation}
			\varphi(t)=\e^{-\ii(\omega_0-\omega_1)t}\exp\left\{-4\int\mathrm{d}\omega\;\frac{|f(\omega)|^2}{\omega^2}(1-\cos\omega t)\right\}.
		\end{equation}
	\end{Proposition}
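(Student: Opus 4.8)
The plan is to insert the factorized forms of the evolution operators~\eqref{eq:evol_h0}--\eqref{eq:evol_h1} into the definition~\eqref{eq:dephasing-sb} and to collapse the resulting string of Weyl operators by means of the algebraic identities~\eqref{eq:weyl_composition}--\eqref{eq:weyl-rotation}. To lighten the notation I would set $g=f/\omega$, which is square-integrable by the second assumption in~\eqref{eq:constraint}, so that $\W{g}$ is well defined. Taking the adjoint of~\eqref{eq:evol_h1} gives $\e^{\ii tH_1}=\e^{\ii t\tilde{\omega}_1}\,\Wdag{g}\,\e^{\ii tH_{\rm B}}\,\W{g}$, and combining it with~\eqref{eq:evol_h0} the scalar prefactors produce $\e^{\ii t(\tilde{\omega}_1-\tilde{\omega}_0)}=\e^{-\ii(\omega_0-\omega_1)t}$, since the integral shift appearing in $\tilde{\omega}_0,\tilde{\omega}_1$ is identical and cancels. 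This already isolates the correct oscillatory prefactor and leaves the operator string $\Wdag{g}\,\e^{\ii tH_{\rm B}}\,\W{g}\,\W{g}\,\e^{-\ii tH_{\rm B}}\,\Wdag{g}$ to be evaluated in the vacuum.

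The second step is to reduce this string to a single Weyl operator. First I would merge the two adjacent copies of $\W{g}$: by the composition law~\eqref{eq:weyl_composition} the accompanying phase vanishes (because $\int|g|^2$ is real) and $\W{g}\W{g}=\W{2g}$. Next I would push the free evolution through the Weyl operator using~\eqref{eq:weyl-rotation} in the form $\e^{\ii tH_{\rm B}}\W{h}\e^{-\ii tH_{\rm B}}=\W{h_{-t}}$, with $h_{-t}(\omega)=\e^{\ii\omega t}h(\omega)$, obtaining $\W{2g_{-t}}$ where $g_{-t}(\omega)=\e^{\ii\omega t}g(\omega)$. The remaining string is $\Wdag{g}\,\W{2g_{-t}}\,\Wdag{g}$, which I would collapse by two further applications of~\eqref{eq:weyl_composition} (writing $\Wdag{g}=\W{-g}$) into a scalar phase multiplying $\W{2(g_{-t}-g)}$.

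The third step is to take the vacuum expectation via~\eqref{eq:weyl_average}, which yields $\Braket{\vac|\W{2(g_{-t}-g)}|\vac}=\exp\{-2\int|g_{-t}-g|^2\}$. Since $g_{-t}(\omega)-g(\omega)=(\e^{\ii\omega t}-1)\,g(\omega)$ and $|\e^{\ii\omega t}-1|^2=2(1-\cos\omega t)$, using $|g|^2=|f|^2/\omega^2$ this reproduces exactly the factor $\exp\{-4\int\frac{|f(\omega)|^2}{\omega^2}(1-\cos\omega t)\}$, completing the identification of $\varphi(t)$.

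The only genuinely delicate point, and the step I would expect to require the most care, is the bookkeeping of the composition phases in the second step: the two applications of~\eqref{eq:weyl_composition} each contribute a factor $\exp\{-\ii\Im\int(\ldots)\}$ built from $\int\e^{\pm\ii\omega t}|g|^2$, hence terms proportional to $\int\sin(\omega t)\,|g|^2$. I would verify that these two imaginary contributions are equal and opposite, so that the net phase is trivial and no spurious oscillatory factor survives beyond the expected $\e^{-\ii(\omega_0-\omega_1)t}$. Throughout, all manipulations are understood to hold on the dense finite-particle domain on which the operator identities~\eqref{eq:evol_h0}--\eqref{eq:evol_h1} were established and on which the Weyl series expansions converge.
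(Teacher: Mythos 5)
Your proposal is correct and follows essentially the same route as the paper's proof: insert the Weyl-conjugated forms \eqref{eq:evol_h0}--\eqref{eq:evol_h1} into \eqref{eq:dephasing-sb}, merge $\W{f/\omega}\W{f/\omega}=\W{2f/\omega}$, rotate with \eqref{eq:weyl-rotation}, collapse the remaining string to $\W{\tfrac{2f}{\omega}(\e^{\ii\omega t}-1)}$, and take the vacuum expectation via \eqref{eq:weyl_average}. Your explicit check that the two composition phases $\propto\int|f/\omega|^2\sin\omega t$ cancel is a point the paper leaves implicit, and it is verified correctly.
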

	\begin{proof}
		We have
		\begin{eqnarray}\label{eq:varphi}
			\varphi(t)&=&\e^{-\ii(\omega_0-\omega_1)t}\Braket{\vac\bigg|\Wdag{\frac{f}{\omega}}\e^{\ii t H_{\rm B}}\W{\frac{f}{\omega}}\W{\frac{f}{\omega}}\e^{-\ii tH_{\rm B}}\Wdag{\frac{f}{\omega}}\bigg|\vac}\nonumber\\
			&=&\e^{-\ii(\omega_0-\omega_1)t}\Braket{\vac\bigg|\Wdag{\frac{f}{\omega}}\e^{\ii t H_{\rm B}}\W{\frac{2f}{\omega}}\e^{-\ii tH_{\rm B}}\Wdag{\frac{f}{\omega}}\bigg|\vac}\nonumber\\
			&=&\e^{-\ii(\omega_0-\omega_1)t}\Braket{\vac\bigg|\Wdag{\frac{f}{\omega}}\W{\frac{2f}{\omega}\e^{\ii\omega t}}\Wdag{\frac{f}{\omega}}\bigg|\vac}\nonumber\\
			&=&\e^{-\ii(\omega_0-\omega_1)t}\Braket{\vac\bigg|\W{\frac{2f}{\omega}(\e^{\ii\omega t}-1)}\bigg|\vac}\nonumber\\
			&=&\e^{-\ii(\omega_0-\omega_1)t}\exp\left\{-\int\mathrm{d}\omega\;\frac{2|f(\omega)|^2}{\omega^2}\left|\e^{\ii\omega t}-1\right|^2\right\}\nonumber\\
			&=&\e^{-\ii(\omega_0-\omega_1)t}\exp\left\{-4\int\mathrm{d}\omega\;\frac{|f(\omega)|^2}{\omega^2}(1-\cos\omega t)\right\},
		\end{eqnarray}
		where, in order, we have expressed the evolution groups $\e^{-\ii tH_0}$, $\e^{\ii tH_1}$ via Eqs.~\eqref{eq:evol_h0}--\eqref{eq:evol_h1} (notice that $\tilde{\omega}_0-\tilde{\omega}_1=\omega_0-\omega_1$), applied Eq.~\eqref{eq:weyl_composition}, then Eq.~\eqref{eq:weyl-rotation} and again~\eqref{eq:weyl_composition}, and finally used Eq.~\eqref{eq:weyl_average}.
	\end{proof}
	This result was obtained by assuming $f(\omega)$ to be a square-integrable function also satisfying the constraints~\eqref{eq:constraint}, which imply, in particular, that the function must vanish sufficiently quickly both at $\omega\to0$ and at $|\omega|\to\infty$. However, the final result makes sense for a far larger class of coupling functions because of the presence of the factor $(1-\cos\omega t)/\omega^2$, which is $\mathcal{O}(1)$ at $\omega\to0$ and $\mathcal{O}(\omega^{-2})$ at $|\omega|\to\infty$.
	
	Consequently, every continuous coupling function $f(\omega)$ which is $\mathcal{O}(1)$ at large values of $\omega$ yields a finite dephasing function $\varphi(t)$. More precisely, given any such function, there exists a well-defined dephasing channel $\Lambda_t$ which can be obtained as the limiting reduced dynamics of a family of dephasing-type spin-boson models, obtained for instance by taking an UV and an IR cutoff:
	\begin{equation}
		f^N(\omega)=\begin{cases}
			f(\omega),&\frac{1}{N}\leq|\omega|\leq N;\\
			0,&|\omega|<\frac{1}{N},\;|\omega|>N,
		\end{cases}
	\end{equation}
	and then performing the limit $N\to\infty$ in Eq.~\eqref{eq:varphi}. We stress that, since the integral in Eq.~\eqref{eq:varphi} converges, the result is \textit{independent} of the specific choice of cutoff. This enables one to take into account possibly singular (i.e. non-normalizable) form factors (see~\cite{gsb} for an \textit{ab initio} approach to such form factors).
	
	As a particular case, it is possible to find two distinct choices of coupling such that $\varphi(t)$ is a purely exponential function at all times (and thus, by Prop.~\ref{prop:semigroup}, the dephasing channel satisfies the semigroup property at all times). One of these is a flat (white) coupling on all reals: $|f(\omega)|^2=\mathrm{const.}$, $-\infty<\omega<\infty$. Indeed, via simple arguments of complex analysis (see Eq.~\eqref{eq:complexint_result3} in the appendix), we get
	\begin{equation}\label{eq:int}
		\int_{-\infty}^{\infty}\mathrm{d}\omega\;\frac{1}{\omega^2}(1-\cos\omega t)=\Re\,\pvint_{-\infty}^{\infty}\mathrm{d}\omega\;\frac{1}{\omega^2}(1-\e^{\ii\omega t})=\pi|t|,
	\end{equation}
	finally implying that, by choosing $|f(\omega)|^2=\frac{\gamma}{8\pi}$ for some $\gamma>0$, we get
	\begin{equation}\label{eq:semi}
		\varphi(t)=\e^{-\ii t(\omega_{0}-\omega_{1})}\e^{-\gamma |t|/2},\qquad|\varphi(t)|^2=\e^{-\gamma |t|},
	\end{equation}
	and thus semigroup dynamics for $t\geq0$.
	
	Another possible coupling function yielding semigroup dynamics is a flat coupling on all \textit{positive} energies, $|f(\omega)|^2=\mathrm{const.}$, $0\leq\omega<\infty$. Indeed, by symmetry, we simply have
	\begin{equation}
		\int_0^\infty\mathrm{d}\omega\;\frac{1}{\omega^2}(1-\cos\omega t)=\frac{1}{2}\int_{-\infty}^{\infty}\mathrm{d}\omega\;\frac{1}{\omega^2}(1-\cos\omega t)=\frac{\pi|t|}{2},
	\end{equation}
	implying that a dephasing function as in Eq.~\eqref{eq:semi} can be also obtained by choosing $|f(\omega)|^2=\frac{\gamma}{4\pi}$ on positive reals.
	
	We point out that, while these choices of coupling may be considered unphysical, the corresponding results are indicative of what would be obtained in more realistic scenarios: we can expect an exponential dephasing in the regime in which the spin-boson interaction is ``approximately flat'' in the energy regime of interest. Besides, it should be possible to find (infinitely many) other choices of coupling that cause the semigroup property to be satisfied up to a finite time, similarly to what happens for the spin-boson model yielding amplitude-damping dynamics~\cite{hidden1,hidden2}.
	
	\subsection{Quantum regression for the dephasing-type spin-boson model}
	
	As previously discussed, the semigroup property does \textit{not} automatically ensure the validity of quantum regression, with simple counterexamples having been discussed before. We will now show that, in fact, the limiting case of dephasing-type spin-boson model with flat coupling \textit{does} satisfy quantum regression. 
	
	First of all, the condition~\eqref{eq:condition} which, by Prop.~\ref{prop:regression}, is equivalent to the validity of quantum regression, reads
	\begin{equation}\label{eq:condition-pd}
		\Braket{\vac\!|\e^{\ii\Delta t_0H_{\ell_0}}\dots\e^{\ii\Delta t_nH_{\ell_n}}\,\e^{-\ii\Delta t_nH_{j_n}}\dots\e^{-\ii\Delta t_0H_{j_0}}|\vac}=\prod_{k=0}^n\Braket{\vac|\e^{\ii\Delta t_kH_{\ell_k}}\e^{-\ii\Delta t_k H_{j_k}}|\!\vac}
	\end{equation}
	for all $j_0,\ell_0,\ldots,j_n,\ell_n\in\{0,1\}$.
	
	\begin{Proposition}\label{prop:flat}
		The following facts hold:
		\begin{enumerate}
			\item [(i)] The phase-damping spin-boson model with $|f(\omega)|^2=\mathrm{const.}$ on $-\infty<\omega<\infty$ satisfies quantum regression, i.e. Eq.~\eqref{eq:condition-pd} holds;
			\item [(ii)] 	The phase-damping spin-boson model with $|f(\omega)|^2=\mathrm{const.}$ on $0\leq\omega<\infty$ violates quantum regression. However, it satisfies the weaker condition\small
			\begin{equation}\label{eq:condition-pd-weaker}
				\left|\Braket{\vac\!|\e^{\ii\Delta t_0H_{\ell_0}}\dots\e^{\ii\Delta t_nH_{\ell_n}}\,\e^{-\ii\Delta t_nH_{j_n}}\dots\e^{-\ii\Delta t_0H_{j_0}}|\vac}\right|=\left|\prod_{k=0}^n\Braket{\vac|\e^{\ii\Delta t_kH_{\ell_k}}\e^{-\ii\Delta t_k H_{j_k}}|\!\vac}\right|,
			\end{equation}\normalsize
			that is, all quantum regression conditions are met up to a phase term.
		\end{enumerate}	
	\end{Proposition}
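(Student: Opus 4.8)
The plan is to reduce both sides of \eqref{eq:condition-pd} to vacuum expectations of single Weyl operators. By \eqref{eq:evol_h0}--\eqref{eq:evol_h1}, each factor $\e^{\mp\ii\tau H_j}$ equals a scalar phase $\e^{\mp\ii\tau\tilde\omega_j}$ times $\W{s_j g}\e^{\mp\ii\tau H_{\rm B}}\Wdag{s_j g}$, with $g=f/\omega$ and $s_0=+1$, $s_1=-1$. First I would factor out all scalar phases: their product is $\exp\{\ii\sum_k\Delta t_k(\tilde\omega_{\ell_k}-\tilde\omega_{j_k})\}$ on the left of \eqref{eq:condition-pd} and coincides term-by-term with the corresponding product on the right, so they cancel. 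For the remaining operator string I would push every free evolution $\e^{\mp\ii\tau H_{\rm B}}$ to the right via the rotation rule \eqref{eq:weyl-rotation} (the net free-evolution time vanishes, so these disappear against the vacuum) and merge all Weyl operators with \eqref{eq:weyl_composition}. This yields, on each side, an expression $\e^{\ii\Phi}\e^{-\frac12\|G\|^2}$, where $G(\omega)=\bigl(\sum_a\varepsilon_a\,\e^{\ii\omega T_a}\bigr)g(\omega)$ is a signed superposition of time-shifted copies of $g$, the cumulative shifts $T_a$ being partial sums of the $\Delta t_k$. By \eqref{eq:weyl_average} the modulus is controlled by the cosine integrals $\int\!\frac{|f|^2}{\omega^2}\cos(\omega\tau)\,\mathrm{d}\omega$ (equivalently by $J(\tau):=\int\!\frac{|f|^2}{\omega^2}(1-\cos\omega\tau)\,\mathrm{d}\omega$, so $|\varphi(\tau)|=\e^{-4J(\tau)}$ as in Prop.~\ref{prop:dephasing}), while the phase $\Phi=-\sum_{a<b}\varepsilon_a\varepsilon_b\int\!\frac{|f|^2}{\omega^2}\sin(\omega(T_b-T_a))\,\mathrm{d}\omega$ is controlled by the sine integrals.

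Next I would establish the modulus identity, i.e.\ the weaker condition \eqref{eq:condition-pd-weaker}, for \emph{both} couplings at once. Writing $\|G\|^2=I_c(0)\bigl(\sum_a\varepsilon_a\bigr)^2-\sum_{a,b}\varepsilon_a\varepsilon_b\,J(T_a-T_b)$, the disentangling bookkeeping gives $\sum_a\varepsilon_a=0$, so the (divergent) $I_c(0)$ term drops and only $J$ survives. In the flat-coupling limit $J(\tau)=\beta|\tau|$ is \emph{linear} --- this is the content of \eqref{eq:int} --- and $|\tau|$ is a conditionally negative-definite kernel, so for ordered points with weights summing to zero one has the variogram identity $-\tfrac12\sum_{a,b}\varepsilon_a\varepsilon_b|T_a-T_b|=\sum_m S_m^2\,(T_{m+1}-T_m)$, with $S_m$ the partial sums of the weights. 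A direct computation shows the relevant partial sums collapse to $S_m=s_{\ell_{m+1}}-s_{j_{m+1}}$ and the gaps are precisely the $\Delta t_k$, whence $\|G\|^2=2\beta\sum_k(s_{\ell_k}-s_{j_k})^2\Delta t_k$. Applying the same computation to each two-point factor $\Braket{\vac|\e^{\ii\Delta t_k H_{\ell_k}}\e^{-\ii\Delta t_k H_{j_k}}|\vac}$ reproduces exactly the $k$-th summand, so $\|G\|^2=\sum_k\|G^{(k)}\|^2$ and \eqref{eq:condition-pd-weaker} holds. Since $J$ is linear for \emph{both} flat couplings, this step covers (i) and (ii) simultaneously.

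Finally I would settle the phases. For (i), $|f(\omega)|^2=\mathrm{const.}$ on $-\infty<\omega<\infty$ makes $|f|^2/\omega^2$ \emph{even}, so every sine integral vanishes and $\Phi\equiv0$; the block phases $\Phi_k$ vanish as well, hence together with the modulus identity one gets \eqref{eq:condition-pd} exactly, proving (i). For (ii), with $|f|^2$ supported on $0\le\omega<\infty$ the sine integrals no longer vanish; the block phases still cancel internally (so $\varphi$ is real-positive after stripping the scalar phase), but the \emph{cross-block} sine terms survive. To exhibit a genuine violation I would take the first nontrivial order, two time points $t_0<t_1$, and the index pattern $\ell_0=j_0=j_1=0$, $\ell_1=1$: after verifying that the IR-divergent linear part $\sum_{a<b}\varepsilon_a\varepsilon_b(T_b-T_a)$ cancels, the surviving combination reduces by a Frullani-type evaluation (in the spirit of \eqref{eq:int}) to $\Phi\propto t_1\ln t_1-t_0\ln t_0-(t_1-t_0)\ln(t_1-t_0)\neq0$. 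Thus \eqref{eq:condition-pd} fails while \eqref{eq:condition-pd-weaker} holds, proving (ii).

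The main obstacle is twofold. First, the disentangling bookkeeping must be done with care: one has to track the cumulative shifts $T_a$ and signs $\varepsilon_a$ of all Weyl operators, verify $\sum_a\varepsilon_a=0$, and identify the partial sums that make the variogram identity close --- mechanical but error-prone. Second, and more delicate, is the singular nature of the flat coupling, which violates the constraints \eqref{eq:constraint}: all identities should therefore be derived for a symmetric (resp.\ one-sided) cutoff $f^N$ and then passed to the limit $N\to\infty$. It is precisely in this limit that $J$ becomes exactly linear --- closing the telescoping for the modulus --- and that, for (ii), the surviving phase has to be shown to be IR-finite and cutoff-independent before its nonvanishing value can be read off.
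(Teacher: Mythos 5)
Your proposal is correct and follows essentially the same route as the paper's proof: reduce the multi-time correlator to the vacuum expectation of a single Weyl operator via Eqs.~\eqref{eq:evol_h0}--\eqref{eq:evol_h1}, show that the modulus telescopes because the flat coupling makes the decay exponent linear in time (Eq.~\eqref{eq:int}), and trace the validity or failure of the phase condition to the evenness (or lack thereof) of $|f(\omega)|^2/\omega^2$, with the cutoff-and-limit caveat correctly flagged. The only differences are organizational: you route the modulus computation through the variogram identity for the conditionally negative-definite kernel $|\tau|$ where the paper directly evaluates the pairwise principal-value integrals to $2\pi\,\Delta t_k\,\delta_{kh}$ (Eqs.~\eqref{eq:complexint_result1}--\eqref{eq:complexint_result2}), and for (ii) you explicitly evaluate the obstructing phase as $\propto t_1\ln t_1-t_0\ln t_0-(t_1-t_0)\ln(t_1-t_0)$ where the paper only exhibits the non-vanishing integral in the $n=2$ condition~\eqref{eq:condition-n2-3}.
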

	
	\begin{proof} By Prop.~\ref{prop:regression}, quantum regression holds iff Eq.~\eqref{eq:condition-pd} holds for any $j_0,\ell_0,\ldots,j_n,\ell_n\in\{0,1\}$. Define
		\begin{equation}
			j\in\{0,1\}\mapsto s_j=\begin{cases}
				+1,&j=0;\\
				-1,&j=1.
			\end{cases}
		\end{equation}
		Now, by Eq.~\eqref{eq:varphi}, and setting again $|f(\omega)|^2=\frac{\gamma}{8\pi}$ for $\gamma>0$, we have\small
		\begin{eqnarray}
			\Braket{\vac|\e^{\ii \Delta t_kH_{\ell_k}}\e^{-\ii \Delta t_k H_{j_k}}|\vac}&=&\begin{dcases}
				\e^{-\ii \Delta t_k(\omega_{j_k}-\omega_{\ell_k})}\!\exp\left\{\!-4\int_{-\infty}^{\infty}\mathrm{d}\omega\:\frac{|f(\omega)|^2}{\omega^2}(1-\cos\omega \Delta t_k)\right\},&\!j_k\neq \ell_k\\
				1,&\!j_k=\ell_k
			\end{dcases}\nonumber\\
			&=&\e^{-\ii \Delta t_k(\omega_{j_k}-\omega_{\ell_k})}\exp\left\{-(s_{j_k}-s_{\ell_k})^2\int_{-\infty}^{\infty}\mathrm{d}\omega\;\frac{|f(\omega)|^2}{\omega^2}(1-\cos\omega \Delta t_k)\right\}\nonumber\\
			&=&\e^{-\ii \Delta t_k(\omega_{j_k}-\omega_{\ell_k})}\exp\left\{-(s_{j_k}-s_{\ell_k})^2\frac{\gamma\Delta t_k}{8}\right\},\nonumber
		\end{eqnarray}\normalsize
		where we used Eq.~\eqref{eq:int}, thus implying
		\begin{equation}\label{eq:regreq1}
			\prod_{k=0}^n\Braket{\vac|\e^{\ii \Delta t_kH_{\ell_k}}\e^{-\ii \Delta t_k H_{j_k}}|\vac}=\left(\prod_{k=0}^n\e^{-\ii \Delta t_k(\omega_{j_k}-\omega_{\ell_k})}\right)\exp\left\{-\frac{\gamma}{8}\sum_{k=0}^n(s_{j_k}-s_{\ell_k})^2\Delta t_k\right\}.
		\end{equation}
		Besides, using repeatedly Eqs.~\eqref{eq:evol_h0}--\eqref{eq:evol_h1} and the composition properties of Weyl operators, the left-hand side of Eq.~\eqref{eq:condition-pd} reads
		\begin{eqnarray}\label{eq:regreq2}
			&&	\Braket{\vac|\e^{\ii t_0H_{\ell_0}}\cdots\e^{\ii (t_n-t_{n-1})H_{\ell_n}}\e^{-\ii (t_n-t_{n-1}) H_{j_n}}\cdots\e^{-\ii t_0 H_{j_0}}|\vac}\nonumber\\
			&=&\left(\prod_{k=0}^n\e^{-\ii\Delta t_k(\omega_{j_k}-\omega_{\ell_k})}\right)\Braket{\vac\Bigg|\W{\frac{f}{\omega}\sum_{k=0}^n(s_{j_k}-s_{\ell_k})\left(\e^{\ii\omega t_k}-\e^{\ii\omega t_{k-1}}\right)}\Bigg|\vac}
		\end{eqnarray}
		and
		\begin{eqnarray}
			&&	\Braket{\vac\Bigg|\W{\frac{f}{\omega}\sum_{k=0}^n(s_{j_k}-s_{\ell_k})\left(\e^{\ii\omega t_k}-\e^{\ii\omega t_{k-1}}\right)}\Bigg|\vac}\nonumber\\
			&=&\exp\left\{-\frac{1}{2}\int_{-\infty}^{\infty}\mathrm{d}\omega\,\frac{|f(\omega)|^2}{\omega^2}\left|\sum_{k=0}^n(s_{j_k}-s_{\ell_k})\left(\e^{\ii\omega t_k}-\e^{\ii\omega t_{k-1}}\right)\right|^2\right\}\nonumber\\
			&=&\exp\left\{-\frac{1}{2}\int_{-\infty}^{\infty}\mathrm{d}\omega\,\frac{|f(\omega)|^2}{\omega^2}\sum_{k,h=0}^n(s_{j_k}-s_{\ell_k})(s_{j_{h}}-s_{\ell_{h}})\left(\e^{\ii\omega t_k}-\e^{\ii\omega t_{k-1}}\right)\left(\e^{-\ii\omega t_\ell}-\e^{-\ii\omega t_{\ell-1}}\right)\right\}\nonumber\\			&=&\exp\left\{-\frac{\gamma}{16\pi}\int_{-\infty}^{\infty}\mathrm{d}\omega\,\frac{1}{\omega^2}\sum_{k,h=0}^n(s_{j_k}-s_{\ell_k})(s_{j_{h}}-s_{\ell_{h}})\left(\e^{\ii\omega t_k}-\e^{\ii\omega t_{k-1}}\right)\left(\e^{-\ii\omega t_\ell}-\e^{-\ii\omega t_{\ell-1}}\right)\right\},\nonumber\\
		\end{eqnarray}\normalsize
		with all additional phase terms due to the combination of Weyl operators vanishing by symmetry. We are thus left with the problem of evaluating the integral\footnote{The first integral is well-defined since the numerator is $\mathcal{O}(\omega^2)$ around $\omega=0$ provided that $t_k\neq t_{k-1}$ and $t_\ell\neq t_{\ell-1}$; we are thus free to compute it as a principal value integral, and decompose it as the sum of two principal value integrals (which, instead, would not be well-defined without a principal value prescription).}
		\begin{eqnarray}
			\int_{-\infty}^{\infty}\mathrm{d}\omega\;\frac{\left(\e^{\ii\omega t_k}-\e^{\ii\omega t_{k-1}}\right)\left(\e^{-\ii\omega t_h}-\e^{-\ii\omega t_{h-1}}\right)}{\omega^2}&=&	\pvint_{-\infty}^{\infty}\mathrm{d}\omega\;\frac{\e^{\ii\omega (t_k-t_h)}-\e^{\ii\omega (t_{k-1}-t_h)}}{\omega^2}\nonumber\\
			&&-\pvint_{-\infty}^{\infty}\mathrm{d}\omega\;\frac{\e^{\ii\omega (t_k-t_{h-1})}-\e^{\ii\omega (t_{k-1}-t_{h-1})}}{\omega^2};\nonumber\\
		\end{eqnarray}
		but we have (see Eqs.~\eqref{eq:complexint_result1}--\eqref{eq:complexint_result2} in the appendix)
		\begin{eqnarray}
			\pvint_{-\infty}^{\infty}\mathrm{d}\omega\;\frac{\e^{\ii\omega (t_k-t_h)}-\e^{\ii\omega (t_{k-1}-t_h)}}{\omega^2}&=&\begin{cases}
				-\pi\left(t_k-t_{k-1}\right),&h\leq k-1;\\
				\pi\left(t_k-t_{k-1}\right),&h\geq k,
			\end{cases}\\
			\pvint_{-\infty}^{\infty}\mathrm{d}\omega\;\frac{\e^{\ii\omega (t_k-t_{h-1})}-\e^{\ii\omega (t_{k-1}-t_{h-1})}}{\omega^2}&=&\begin{cases}
				-\pi\left(t_k-t_{k-1}\right),&h\leq k;\\
				\pi\left(t_k-t_{k-1}\right),&h\geq k+1,
			\end{cases}
		\end{eqnarray}
		thus implying
		\begin{equation}
			\int_{-\infty}^{\infty}\mathrm{d}\omega\;\frac{\left(\e^{\ii\omega t_k}-\e^{\ii\omega t_{k-1}}\right)\left(\e^{-\ii\omega t_h}-\e^{-\ii\omega t_{h-1}}\right)}{\omega^2}=2\pi\left(t_k-t_{k-1}\right)\delta_{kh},
		\end{equation}
		with $\delta_{kh}$ being the Kronecker delta, and therefore
		\begin{eqnarray}\label{eq:regreq3}
			&&\sum_{k,h=0}^n(s_{j_k}-s_{\ell_k})(s_{j_h}-s_{\ell_h})\int_{-\infty}^{\infty}\mathrm{d}\omega\;\frac{1}{\omega^2}\left(\e^{\ii\omega t_k}-\e^{\ii\omega t_{k-1}}\right)\left(\e^{-\ii\omega t_h}-\e^{-\ii\omega t_{h-1}}\right)\nonumber\\&=&2\pi	\sum_{k=0}^n(s_{j_k}-s_{\ell_k})^2\Delta t_k.
		\end{eqnarray}
		Inserting Eq.~\eqref{eq:regreq3} in Eq.~\eqref{eq:regreq2}, the quantities in Eqs.~\eqref{eq:regreq1}--\eqref{eq:regreq2} are finally shown to be equal. This finally proves (i). 
		
		Identical calculations can be performed when $|f(\omega)|^2$ is constant on positive energies, with the only difference being the fact that the additional phase terms in Eq.~\eqref{eq:regreq2} do \textit{not} generally vanish. For example, the following equality holds:
		\begin{eqnarray}
			&&\Braket{\vac|\e^{\ii t_0H_0}\e^{\ii(t_1-t_0)H_1}\e^{-\ii t_1H_0}|\vac}\nonumber\\
			&=&\exp\left\{\ii\int\mathrm{d}\omega\;\frac{|f(\omega)|^2}{\omega^2}\left[\sin\omega t_0(1-\cos\omega t_1)-\sin\omega t_1(1-\cos\omega t_0)\right]\right\}\varphi(t_1-t_0);
		\end{eqnarray}
		Eq.~\eqref{eq:condition-n2-3} would require the integral in the additional phase term to vanish for all $t_1\geq t_0\geq0$, which does not happen if $|f(\omega)|^2=\text{const.}$ on the half-line. (ii) is proven.		
	\end{proof}
	
	\section{Multilevel scenario} \label{SEC-IV}
	
	\subsection{Generalities, CP-divisibility, and quantum regression}
	To conclude this work, let us extend our analysis to the multilevel (qu$d$it) case. Consider a self-adjoint Hamiltonian $\mathbf{H}$ on a Hilbert space $\hilb=\hilbs\otimes\hilbb$, where now we fix $\dim\hilbs=d\geq2$, having the following decomposition:
	\begin{equation}\label{eq:h_multi}
		\mathbf{H}=\sum_{j=0}^{d-1}\ketbra{j}{j}\otimes H_j\simeq\begin{pmatrix}
			H_0 &     &     &  \\
			& H_1 &     &  \\
			&     & H_2 &  \\
			&     &     &\ddots
		\end{pmatrix},
	\end{equation}
	again with $\{\ket{j}\}_{j=0,\dots,d-1}$ being an orthonormal base of $\hilbs$, and $H_0,H_1,\dots,H_{d-1}$ being self-adjoint operators. Following analogous computations as in the two-level scenario, one immediately shows that the family of CPTP maps $\Lambda_t:\mathcal{B}(\hilbs)\rightarrow\mathcal{B}(\hilbs)$ representing the reduced dynamics induced by $\mathbf{H}$ on the $d$-dimensional space $\hilbs$ reads as in Eq.~\eqref{DM}:
	\begin{equation}
		\Lambda_t(\rho)=\sum_{j,\ell=0}^{d-1}\tr\left[\e^{-\ii tH_{j}}\rho_{\rm B}\,\e^{\ii tH_{\ell}}\right]\,\ketbra{j}{j}\rho\ketbra{\ell}{\ell},
	\end{equation}
	that is, defining for all $j,\ell=0,\dots,d-1$
	\begin{equation}\label{eq:dephasing_multi}
		\varphi_{j\ell}(t)=\tr\left[\e^{-\ii tH_j}\rho_{\rm B}\,\e^{\ii tH_{\ell}}\right],
	\end{equation}
	we obtain
	\begin{equation}
		\Lambda_t(\rho)=
		\begin{pmatrix}
			\rho_{00}&	\rho_{01}\,\varphi_{01}(t)&	\rho_{02}\,\varphi_{02}(t)& \ldots \\
			\rho_{10}\,\varphi_{10}(t)&	\rho_{11}&	\rho_{12}\,\varphi_{12}(t)&\ldots  \\
			\rho_{20}\,\varphi_{20}(t)&	\rho_{21}\,\varphi_{21}(t)&	\rho_{22}&\ldots  \\
			\vdots&	\vdots&	\vdots & \ddots\\
		\end{pmatrix},
	\end{equation}
	where we have used the obvious equality $\varphi_{jj}(t)=1$. This is an immediate generalization of the qubit dephasing channel in Eq.~\eqref{eq:channel}, which can be written more compactly as
	\begin{equation}
		\Lambda_t(\rho)= \Phi(t) \circ \rho ,
	\end{equation}
	with $\Phi(t)=\left(\varphi_{j\ell}(t)\right)_{j,\ell}$, and $\circ$ denoting the entrywise (Hadamard) product between two equal-sized matrices.
	
	Since $\varphi_{jj}(t)=1$ and $\varphi_{\ell j}(t)=\varphi_{j\ell}(t)^*$, this process depends on $d(d-1)/2$ independent dephasing functions. As in the qubit case, each dephasing function $\varphi_{j\ell}(t)$ satisfies the following properties:
	\begin{itemize}
		\item $\varphi_{j\ell}(0)=1$, and $|\varphi_{j\ell}(t)|\leq1$;
		\item $t\mapsto\varphi_{j\ell}(t)$ is continuous,
	\end{itemize}
	and a multilevel counterpart of Prop.~\ref{prop:semigroup} holds.
	
	\begin{Proposition}\label{prop:semigroup-multi}
		Let $\varphi_{j\ell}(t)\neq0$ for all $j\neq\ell=0,\dots,d-1$, and all $t\geq0$. Then the process $t\mapsto\Lambda_t$ is invertible; besides, the following statements are equivalent:
		\begin{itemize}
			\item[(i)] $\Lambda_t$ is CP-divisible;
			\item[(ii)] $\Lambda_t$ is P-divisible;
			\item[(iii)] for all $t\geq s\geq0$, the matrix
			\begin{equation}\label{eq:matrix}
				\Phi(t)\circ\Phi(s)^{\circ -1}=
				\begin{pmatrix}
					1&\varphi_{01}(t)/\varphi_{01}(s)&\varphi_{02}(t)/\varphi_{02}(s)&\cdots\,\\
					{\varphi_{10}(t)/\varphi_{10}(s)} & 1 & \varphi_{12}(t)/\varphi_{12}(s) & \cdots\,\\
					{\varphi_{20}(t)/\varphi_{20}(s)} & {\varphi_{21}(t)/\varphi_{21}(s)} & 1 & \cdots\,\\
					\vdots&\vdots&\vdots&\ddots
				\end{pmatrix}
			\end{equation}
			is positive semidefinite.
		\end{itemize}
		In particular, a \textit{necessary} condition for CP-divisibility is that each function $t\mapsto|\varphi_{j\ell}(t)|$ is monotonically non-increasing. The condition is also sufficient for $n=2$.
		
		Finally, $\Lambda_t$ satisfies the semigroup property $\Lambda_t=\Lambda_{t-s}\Lambda_s$ for all $t\geq s\geq0$ if and only if, for all $j\neq\ell=0,\dots,d-1$,
		\begin{equation}\label{eq:exp2}
			\varphi_{j\ell}(t)=\e^{-\left(\ii\Omega_{j\ell}+\frac{\gamma_{j\ell}}{2}\right)t},\qquad t\geq0
		\end{equation}
		for some $\Omega_{j\ell}\in\mathbb{R}$ and $\gamma_{j\ell}\geq0$.
	\end{Proposition}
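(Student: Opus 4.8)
The plan is to recast the whole statement in the language of Hadamard (Schur) multiplier maps, for which each assertion becomes a statement about a single $d\times d$ matrix. Writing $\Phi_M(\rho)=M\circ\rho$ for the Hadamard-product map associated with a matrix $M$, the channel is $\Lambda_t=\Phi_{\Phi(t)}$. Since every entry of $\Phi(t)$ is nonzero (the diagonal entries equal $1$ and the off-diagonal ones are nonzero by hypothesis), $\Phi(t)$ admits an entrywise inverse $\Phi(t)^{\circ-1}$, and because $\Phi_A\Phi_B=\Phi_{A\circ B}$, the map $\Phi_{\Phi(t)^{\circ-1}}$ inverts $\Lambda_t$; this gives the invertibility of $t\mapsto\Lambda_t$. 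The unique propagator is then $V_{t,s}=\Lambda_t\Lambda_s^{-1}=\Phi_{M_{t,s}}$ with $M_{t,s}=\Phi(t)\circ\Phi(s)^{\circ-1}$, whose entries are $\varphi_{j\ell}(t)/\varphi_{j\ell}(s)$ -- precisely the matrix in Eq.~\eqref{eq:matrix}. Its diagonal entries being $1$, $V_{t,s}$ is automatically trace-preserving, so the entire problem reduces to deciding positivity and complete positivity of $\Phi_{M_{t,s}}$.

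The core of the argument is the following characterization, which I would state for an arbitrary matrix $M$ with unit diagonal: $\Phi_M$ is completely positive if and only if $M\geq0$, and -- crucially -- $\Phi_M$ is positive if and only if it is completely positive. The CP part is the Choi criterion: the Choi matrix $\sum_{i,k}M_{ik}\,\ketbra{i}{k}\otimes\ketbra{i}{k}$ is supported on $\mathrm{span}\{\ket{i}\otimes\ket{i}\}$ and equals $M$ there, so it is positive semidefinite iff $M\geq0$ (sufficiency is just Schur's product theorem). For the collapse of positivity onto complete positivity, I would feed $\Phi_M$ the rank-one state $\ketbra{v}{v}$ with a vector $v$ having all components nonzero: then $M\circ\ketbra{v}{v}=DMD^\dagger$ with $D=\mathrm{diag}(v_0,\dots,v_{d-1})$ invertible, and a congruence by an invertible matrix preserves inertia, so $DMD^\dagger\geq0$ forces $M\geq0$. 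Hence positivity of $\Phi_M$ already implies $M\geq0$, and therefore complete positivity. Applying this with $M=M_{t,s}$ yields the chain (i)$\,\Leftrightarrow\,$(iii)$\,\Leftrightarrow\,$(ii). I expect this equivalence of P- and CP-divisibility to be the main obstacle, and the elementary $DMD^\dagger$ congruence is the device that removes it.

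For the necessary condition, note that positivity of $M_{t,s}$ implies positivity of each $2\times2$ principal submatrix indexed by a pair $j\neq\ell$; this submatrix is Hermitian (using $\varphi_{\ell j}=\varphi_{j\ell}^*$) with unit diagonal and off-diagonal $\varphi_{j\ell}(t)/\varphi_{j\ell}(s)$, hence positive semidefinite iff $|\varphi_{j\ell}(t)/\varphi_{j\ell}(s)|\leq1$, i.e. iff $|\varphi_{j\ell}(t)|\leq|\varphi_{j\ell}(s)|$. Thus each $t\mapsto|\varphi_{j\ell}(t)|$ must be non-increasing. When $d=2$ the matrix in Eq.~\eqref{eq:matrix} is itself $2\times2$, so this single inequality is also sufficient, recovering Prop.~\ref{prop:semigroup}.

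Finally, for the semigroup property I would again use $\Phi_A\Phi_B=\Phi_{A\circ B}$: one has $\Lambda_{t-s}\Lambda_s=\Phi_{\Phi(t-s)\circ\Phi(s)}$, so $\Lambda_t=\Lambda_{t-s}\Lambda_s$ for all $t\geq s\geq0$ holds iff $\varphi_{j\ell}(t)=\varphi_{j\ell}(t-s)\,\varphi_{j\ell}(s)$ for every entry. For fixed $j\neq\ell$ this is Cauchy's multiplicative functional equation on $[0,\infty)$ for a continuous, nowhere-vanishing function normalized by $\varphi_{j\ell}(0)=1$; its solutions are exactly $\varphi_{j\ell}(t)=\e^{\lambda_{j\ell}t}$, and the constraint $|\varphi_{j\ell}(t)|\leq1$ forces $\Re\lambda_{j\ell}\leq0$. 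Writing $\lambda_{j\ell}=-(\ii\Omega_{j\ell}+\gamma_{j\ell}/2)$ with $\Omega_{j\ell}\in\mathbb{R}$ and $\gamma_{j\ell}\geq0$ gives Eq.~\eqref{eq:exp2}, completing the plan.
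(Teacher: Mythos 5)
Your proposal is correct and follows essentially the same route as the paper: both recast $\Lambda_t$ as a Hadamard-multiplier map, use $\Pi_A\Pi_B=\Pi_{A\circ B}$ to get invertibility and the propagator $\Pi_{\Phi(t)\circ\Phi(s)^{\circ-1}}$, collapse positivity onto complete positivity by feeding the map a rank-one state with nowhere-vanishing components (the paper uses the all-ones matrix, i.e.\ your $v=(1,\dots,1)$), read off the monotonicity condition from the $2\times2$ principal minors, and reduce the semigroup property to the multiplicative Cauchy equation. No substantive differences.
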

	\begin{proof}
		Given any matrix $A=\left(a_{j\ell}\right)_{j,\ell=0,\dots,d-1}$, consider the map
		\begin{equation}\label{eq:kraus2}
			\rho\in\mathcal{B}(\hilbs)\mapsto \Pi_A(\rho)=A\circ\rho=\sum_{j,\ell=0}^{d-1} a_{j\ell}\ketbra{j}{j}\rho\ketbra{\ell}{\ell}.
		\end{equation}
		Clearly, it satisfies the following composition law:
		\begin{equation}
			\Pi_A\Pi_B=\Pi_{A\circ B}
		\end{equation}
		and, in particular, $\Pi_A$ is invertible if and only if $A$ admits an Hadamard inverse $A^{\circ-1}$, with $\Pi^{-1}_A=\Pi_{A^{\circ-1}}$. Besides, we can easily show that the following statements are equivalent:
		\begin{itemize}
			\item[(i)] $\Pi_A$ is completely positive;
			\item[(ii)] $\Pi_A$ is positive;
			\item[(iii)] $A$ is positive semidefinite.
		\end{itemize}
		Indeed, since the right-hand side of Eq.~\eqref{eq:kraus2} corresponds to a (generally non-diagonal) Kraus representation of $\Pi_A$, necessarily $\Pi_A$ is completely positive if and only $A$ is a positive semidefinite matrix, so that (i)$\iff$(iii). Obviously (i)$\implies$(ii); to complete the proof, we need to show show (ii)$\implies$(iii). Suppose that $A$ is not positive semidefinite; consider the positive semidefinite matrix
		\begin{equation}
			X=\begin{pmatrix}
				1&1&\cdots&1\\
				1&1&\cdots&1\\\
				\vdots&\vdots&\ddots&\vdots\\
				1&1&\cdots&1
			\end{pmatrix}.
		\end{equation}
		Then obviously $\Pi_A(X)=A$, which is not positive semidefinite; therefore, $\Pi_A$ is not a positive channel. This proves the aforementioned equivalence.
		
		Now, clearly the multilevel dephasing process $t\mapsto\Lambda_t$ corresponds to $\Lambda_t=\Pi_{\Phi(t)}$ as defined in Eq.~\eqref{eq:kraus2}; consequently, the process is invertible whenever the matrix $\Phi(t)$ admits an Hadamard inverse, that is, $\varphi_{j\ell}(t)\neq0$ for all $t\geq0$. If so, we have $\Lambda_t=W_{t,s}\Lambda_s$, with
		\begin{equation}
			W_{t,s}=\Lambda_t\Lambda_s^{-1}=\Pi_{\Phi(t)}\Pi_{\Phi(s)^{\circ-1}}=\Pi_{\Phi(t)\circ\Phi(s)^{\circ-1}}.
		\end{equation}
		Consequently, the process is CP-divisible, or equivalently P-divisible, if and only if the matrix~\eqref{eq:matrix} is positive definite. 	In particular, necessarily all its entries must have modulus not greater than one, i.e.
		\begin{equation}\label{eq:mono}
			\left|\varphi_{j\ell}(t)\varphi_{j\ell}(s)^{-1}\right|\leq1\implies \left|\varphi_{j\ell}(t)|\leq|\varphi_{j\ell}(s)\right|\qquad\forall t\geq s\geq0;
		\end{equation}
		indeed, if this condition fails for some element, say $j=0,\ell=1$, and some $t_0\geq s_0\geq0$, then the matrix will admit a negative principal minor:
		\begin{equation}
			\left|\begin{array}{cc}
				1&\varphi_{01}(t_0)\varphi_{01}(s_0)^{-1}\\
				\left(\varphi_{01}(t_0)\varphi_{01}(s_0)^{-1}\right)^*&1
			\end{array}\right|=1-|\varphi_{10}(t_0)|^2|\varphi_{10}(s_0)^{-1}|^2<0,
		\end{equation}
		and thus CP-divisibility fails. In the particular case $n=2$, the condition~\eqref{eq:mono} is clearly sufficient as well.
		
		To complete the proof, suppose that $t\mapsto\Lambda_t$ satisfies the semigroup property $\Lambda_t=\Lambda_{t-s}\Lambda_s$ for all $t\geq s\geq0$. This clearly holds if and only if
		\begin{equation}
			\Phi(t-s)\circ\Phi(s)=\Phi(t)\implies\varphi_{j\ell}(t-s)\varphi_{j\ell}(s)=\varphi_{j\ell}(t),\qquad j,\ell=0,\dots,n,
		\end{equation}
		and thus, since all functions $t\mapsto\varphi_{j\ell}(t)$ are continuous and must satisfy $|\varphi_{j\ell}(t)|\leq1$, if and only if each of them is an exponentially decaying function.
	\end{proof}
	We stress that, in the general qu$d$it case, the monotonicity condition on $t\mapsto|\varphi_{j\ell}(t)|$ is only a necessary condition for CP-divisibility, differently from what happens in the qubit case. The monotonicity condition for $t\mapsto|\varphi_{j\ell}(t)|$ only ensures the nonnegativity of all principal values of order $2$, but does not ensure nonnegativity for principal minors of order $3,4,\dots,d$. We also remark that the complete positivity of $\Lambda_t$, which is guaranteed \textit{a priori} by its very definition~\eqref{DM}, can be also checked explicitly by noticing that $\Phi(t)$ is indeed a positive semidefinite matrix. Indeed, by writing Eq.~\eqref{eq:dephasing_multi} as
	\begin{equation}
		\varphi_{j\ell}(t)=\tr\left[\left(\e^{-\ii tH_\ell}\sqrt{\rho_{\rm B}}\right)^\dag\left(\e^{-\ii tH_j}\sqrt{\rho_{\rm B}}\right)\right],
	\end{equation}
	that is, as the Hilbert-Schmidt product between $\e^{-\ii tH_\ell}\sqrt{\rho_{\rm B}}$ and $\e^{-\ii tH_j}\sqrt{\rho_{\rm B}}$, clearly $\Phi(t)$ is the Gram matrix associated with the family $\{\e^{-\ii tH_j}\sqrt{\rho_{\rm B}}\}_{j=0,\dots,d-1}$, thus being positive semidefinite. Finally, notice that a family of Hadamard-type semigroups was also studied in Ref.~\cite{FABIO}.
	
	An interesting consequence of the previous proposition is the following. Recall that, given a density operator $\rho$ and a fixed orthonormal basis $\mathcal{B}=\{e_0,\ldots,e_{d-1}\}$, the \textit{coherence} of $\rho$ with respect to the basis $\mathcal{B}$ can be quantified via the following quantity:~\cite{Coh1,Coh2} 	
	\begin{equation}\label{}
		\mathcal{C}(\rho) = \sum_{j \neq \ell} |\rho_{j\ell}| .
	\end{equation}
	In our case, if $\Lambda_t$ is CP-divisible, then	
	\begin{equation}\label{dC}
		\frac{\mathrm{d}}{\mathrm{d}t} \mathcal{C}(\Lambda_t(\rho)) \leq 0;
	\end{equation}
	besides, in the qubit case the above condition is also sufficient for CP-divisibility.	
	
	As for quantum regression, Prop.~\ref{prop:regression} has an immediate generalization to the qu$d$it case:
	\begin{Proposition}\label{prop:regression_multi}
		Given a Hamiltonian $\H$ in the form~\eqref{eq:h_multi} and an environment state $\rho_{\rm B}$, the couple $(\H,\rho_{\rm B})$ satisfies quantum regression if and only if, for all $t_n\geq t_{k-1}\geq\ldots\geq t_0\geq0$, the following equality holds:
		\begin{equation}\label{eq:condition_multi}
			\tr\,\Bigl[\e^{-\ii \Delta t_n H_{j_n}}\cdots\e^{-\ii \Delta t_0 H_{j_0}}\rho_{\rm B}\,\e^{\ii\Delta t_0H_{\ell_0}}\cdots\e^{\ii \Delta t_nH_{\ell_n}}\Bigr]=\prod_{k=0}^n\tr\,\bigl[\e^{-\ii\Delta t_kH_{j_k}}\rho_{\rm B}\,\e^{\ii\Delta t_k H_{\ell_k}}\bigr]
		\end{equation}
		for all $j_0,\ell_0,\ldots,j_n,\ell_n\in\{0,1,\dots,d-1\}$, where $\Delta t_k=t_k-t_{k-1}$ and $\Delta t_0=t_0$.
	\end{Proposition}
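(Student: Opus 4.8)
The plan is to reproduce, \emph{mutatis mutandis}, the argument proving Prop.~\ref{prop:regression}; the only modification is that the block index now ranges over $\{0,1,\dots,d-1\}$ instead of $\{0,1\}$. First I would record that, for $\H$ in the form~\eqref{eq:h_multi}, the block structure of $\e^{-\ii t\H}$ yields the decompositions
\begin{equation}
\mathcal{U}_t=\sum_{j,\ell=0}^{d-1}\ketbra{j}{j}\cdot\ketbra{\ell}{\ell}\otimes\e^{-\ii tH_j}(\cdot)\e^{\ii tH_\ell},\qquad \Lambda_t=\sum_{j,\ell=0}^{d-1}\varphi_{j\ell}(t)\,\ketbra{j}{j}\cdot\ketbra{\ell}{\ell},
\end{equation}
exactly as in the qubit case, with $\varphi_{j\ell}(t)$ as in~\eqref{eq:dephasing_multi}.

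Next I would substitute these decompositions into both sides of the regression identity~\eqref{eq:regr}, insert the interventions $\tilde{\mathcal{E}}_k=(X_k\otimes\oper_{\rm B})(\cdot)(Y_k\otimes\oper_{\rm B})$ and $\mathcal{E}_k=X_k(\cdot)Y_k$, and expand. Both sides then become sums over the index string $j_0,\ell_0,\dots,j_n,\ell_n\in\{0,\dots,d-1\}$ in which the tensor factor acting on $\hilbs$, namely
\begin{equation}
X_n\ketbra{j_n}{j_n}X_{n-1}\cdots X_0\ketbra{j_0}{j_0}\,\rho\,\ketbra{\ell_0}{\ell_0}Y_0\cdots Y_{n-1}\ketbra{\ell_n}{\ell_n}Y_n,
\end{equation}
is \emph{identical} on both sides. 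The sole difference lies in the bath-dependent scalar weight multiplying this operator: on the left it is the single time-ordered trace appearing on the left-hand side of~\eqref{eq:condition_multi}, whereas on the right it is the product $\prod_{k=0}^n\varphi_{j_k\ell_k}(\Delta t_k)$ on the right-hand side of~\eqref{eq:condition_multi}. Taking $\tr_{\rm S}$ of both expressions leaves these weights untouched.

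Since~\eqref{eq:regr} must hold for \emph{every} choice of interventions, I would then close the equivalence as follows. The implication~\eqref{eq:condition_multi}$\Rightarrow$\eqref{eq:regr} is immediate, since matching the weights string by string makes the two sums coincide for all $X_k,Y_k$ and all $\rho$. For the converse I would isolate a single string by choosing each $X_k,Y_k$, and $\rho$, to be a matrix unit $\ketbra{a}{b}$: the intermediate overlaps $\langle\,\cdot\,|\,\cdot\,\rangle$ collapse to Kronecker deltas that force the indices to prescribed values, so that the system operator above survives for exactly one string and $\tr_{\rm S}$ of it is a nonzero constant. Hence~\eqref{eq:regr} reduces to the scalar identity~\eqref{eq:condition_multi} for that string, and ranging over all matrix-unit interventions yields it for every string. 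The one point deserving care --- and the only genuine obstacle --- is precisely this isolation step: one must check that the freedom in $X_k,Y_k,\rho$ really separates all $d^{2(n+1)}$ terms. This is the linear-independence mechanism dispatched with ``clearly'' in the qubit proof, and it carries over verbatim to arbitrary $d$.
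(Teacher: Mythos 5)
Your proposal is correct and follows essentially the same route as the paper, whose proof of this proposition is literally ``repeat the argument of Prop.~\ref{prop:regression} with all sums extended from $0$ to $d-1$''. Your explicit matrix-unit isolation of a single index string (together with the implicit linearity-in-$\rho$ argument needed since matrix units are not states) just fills in the step the qubit proof dispatches with ``it is therefore clear'', so nothing further is needed.
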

	\begin{proof}
		The proof follows by repeating the same steps as in Prop.~\ref{prop:regression} by just extending all sums from $0$ to $d-1$.
	\end{proof}
	In particular, a multilevel generalization of Corollary~\ref{coroll} holds as well: if $\{H_0,\dots,H_{d-1}\}$ is a \textit{commutative} family of self-adjoint operators satisfying quantum regression, then necessarily all dephasing functions satisfy $|\varphi_{j\ell}(t)|=1$, so that the dephasing of all matrix elements is trivial.
	
	\subsection{Dephasing-type generalized spin-boson models}
	
	We will finally introduce a family of generalized spin-boson (GSB) models whose reduced dynamics corresponds to a multilevel dephasing channel. Let $\hilbb$ be the symmetric Fock space associated with a continuous boson bath as before. We define
	\begin{equation}
		\mathbf{H}=H_{\rm S}\otimes\oper_{\rm B}+\oper_{\rm S}\otimes H_{\rm B}+\sum_{j=0}^{d-1}\ketbra{j}{j}\otimes\left(b(f_j)+b^\dag(f_j)\right),
	\end{equation}
	with $H_{\rm S}=\sum_j\omega_j\ketbra{j}{j}$, $H_{\rm B}$ as before, and where $f_0,\dots,f_{d-1}$ is a family of square-integrable coupling functions. The case previously studied is recovered by setting $d=2$ and $f_1=-f_0$. Clearly, this Hamiltonian can be decomposed as in Eq.~\eqref{eq:h_multi} with
	\begin{equation}
		H_j=\omega_j+H_{\rm B}+b(f_j)+b^\dag(f_j),
	\end{equation}
	again with domain $\mathcal{D}(H_j)=\mathcal{D}(H_{\rm B})$. The family of operators $\{H_0,\dots,H_{d-1}\}$ is noncommutative excepting the trivial case $f_0=f_1=...=f_{d-1}$.
	
	Correspondingly, by defining $\Lambda_t(\rho)=\tr_{\rm B}\left[\e^{-\ii t\mathbf{H}}\left(\rho\otimes\rho_{\rm B}\right)\e^{\ii t\mathbf{H}}\right]$ as usual, we obtain a multilevel dephasing channel with
	\begin{equation}
		\varphi_{j\ell}(t)=\Braket{\vac|\e^{\ii tH_{\ell}}\e^{-\ii tH_j}|\vac}.
	\end{equation}
	By assuming the same constraints as in Eq.~\eqref{eq:constraint} and using Weyl operators like in the qubit case, we obtain the following expression for the evolution group associated with each $H_j$:
	\begin{equation}\label{eq:evolution}
		\e^{-\ii tH_j}=\e^{-\ii t\tilde{\omega}_j}\W{\frac{f_j}{\omega}}\e^{-\ii tH_{\rm B}}\Wdag{\frac{f_j}{\omega}},
	\end{equation}
	where
	\begin{equation}
		\tilde{\omega}_j=\omega_j-\int\mathrm{d}\omega\:\frac{|f_j(\omega)|^2}{\omega},
	\end{equation}
	thus yielding an immediate generalization of Prop.~\ref{prop:dephasing}:
	\begin{Proposition}\label{prop:dephasing_multi}
		For all $j,\ell=0,\dots,d-1$, the dephasing function $\varphi_{j\ell}(t)$ in Eq.~\eqref{eq:dephasing_multi} is given by
		\begin{equation}\label{eq:dephasing_gsb}
			\varphi_{j\ell}(t) = \e^{-\ii(\omega_j-\omega_\ell)t}\e^{\ii\theta_{j\ell}(t)}\exp\left\{-\int\mathrm{d}\omega\;\left(|f_j(\omega)-f_\ell(\omega)|^2\right)\frac{1-\cos\omega t}{\omega^2}\right\}.
		\end{equation}
		where
		\begin{equation}\label{eq:theta}
			\theta_{j\ell}(t)=2\Im\int\mathrm{d}\omega\;f_\ell(\omega)^*f_j(\omega)\frac{1-\cos\omega t}{\omega^2}+\int\mathrm{d}\omega\,\left(|f_j(\omega)|^2-|f_\ell(\omega)|^2\right)\left(\frac{t}{\omega}-\frac{\sin\omega t}{\omega^2}\right)
		\end{equation}
	\end{Proposition}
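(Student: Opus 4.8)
The plan is to mirror the computation of Proposition~\ref{prop:dephasing}, only now keeping track of the phase terms which, in the qubit case $f_1=-f_0$, happened to cancel by symmetry. The starting point is $\varphi_{j\ell}(t)=\Braket{\vac|\e^{\ii tH_\ell}\e^{-\ii tH_j}|\vac}$ together with the diagonalized evolution group~\eqref{eq:evolution}, which also gives, by taking adjoints, $\e^{\ii tH_\ell}=\e^{\ii t\tilde\omega_\ell}\W{f_\ell/\omega}\e^{\ii tH_{\rm B}}\Wdag{f_\ell/\omega}$. Substituting both expressions and factoring out the scalar prefactor $\e^{\ii t(\tilde\omega_\ell-\tilde\omega_j)}$ leaves a vacuum expectation value of a string of four Weyl operators interleaved with the two free rotations $\e^{\pm\ii tH_{\rm B}}$, namely $\W{f_\ell/\omega}\e^{\ii tH_{\rm B}}\Wdag{f_\ell/\omega}\W{f_j/\omega}\e^{-\ii tH_{\rm B}}\Wdag{f_j/\omega}$.

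The first task is to collapse this string into a single Weyl operator. Merging the two innermost operators $\Wdag{f_\ell/\omega}\W{f_j/\omega}$ via~\eqref{eq:weyl_composition}, then using the rotation law~\eqref{eq:weyl-rotation} to turn the surviving $\e^{\ii tH_{\rm B}}(\cdots)\e^{-\ii tH_{\rm B}}$ into a multiplication by $\e^{\ii\omega t}$ inside the argument, and finally applying~\eqref{eq:weyl_composition} twice more to absorb the two outer operators, the whole product reduces to $\W{\frac{f_j-f_\ell}{\omega}\left(\e^{\ii\omega t}-1\right)}$ up to accumulated phases. The vacuum formula~\eqref{eq:weyl_average} together with $|\e^{\ii\omega t}-1|^2=2(1-\cos\omega t)$ then immediately produces the modulus of~\eqref{eq:dephasing_gsb}, i.e.\ $\exp\{-\int\mathrm{d}\omega\,|f_j-f_\ell|^2(1-\cos\omega t)/\omega^2\}$, exactly as in the qubit case.

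The hard part is the phase: one must collect every imaginary term generated along the way and show that their sum equals $t(\tilde\omega_\ell-\tilde\omega_j)+\theta_{j\ell}(t)$ with $\theta_{j\ell}(t)$ as in~\eqref{eq:theta}. Each of the three applications of~\eqref{eq:weyl_composition} contributes a factor of the form $\exp\{\pm\ii\Im\int g^*h\}$ with $g,h\in\{f_\ell/\omega,\,f_j/\omega,\,\e^{\ii\omega t}(f_j-f_\ell)/\omega\}$. Writing out these three contributions and using the identity $\Im(\e^{\ii\omega t}z)+\Im(\e^{-\ii\omega t}z)=2\cos\omega t\,\Im z$ lets the cross terms combine into $2\Im\int f_\ell^*f_j\,(1-\cos\omega t)/\omega^2$, while the diagonal terms $\propto|f_\ell|^2,|f_j|^2$ carry factors $\e^{\pm\ii\omega t}$ whose imaginary parts yield $\pm\sin\omega t$ and assemble into $-\int(|f_j|^2-|f_\ell|^2)\sin\omega t/\omega^2$. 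Adding the energy-shift contribution $t(\tilde\omega_\ell-\tilde\omega_j)=-t(\omega_j-\omega_\ell)+t\int(|f_j|^2-|f_\ell|^2)/\omega$ then reconstitutes both the overall factor $\e^{-\ii(\omega_j-\omega_\ell)t}$ and the remaining $t/\omega$ piece of $\theta_{j\ell}(t)$. The only genuine subtlety is the bookkeeping: unlike Proposition~\ref{prop:dephasing}, where the antisymmetric choice $f_1=-f_0$ annihilated these phases, here they survive and must be tracked sign by sign. All integrals are well defined under the assumptions~\eqref{eq:constraint}, so no convergence issue arises.
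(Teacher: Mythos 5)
Your proposal is correct and follows exactly the route the paper intends: the paper's proof of this proposition is literally ``same calculations as the proof of Prop.~\ref{prop:dephasing}'', i.e.\ substituting the Weyl-conjugated evolution groups~\eqref{eq:evolution}, collapsing the operator string via Eqs.~\eqref{eq:weyl_composition}--\eqref{eq:weyl-rotation}, and evaluating the vacuum expectation with Eq.~\eqref{eq:weyl_average}. Your phase bookkeeping (the three composition phases combining into $2\Im\int f_\ell^*f_j(1-\cos\omega t)/\omega^2-\int(|f_j|^2-|f_\ell|^2)\sin\omega t/\omega^2$, plus the $\tilde\omega_\ell-\tilde\omega_j$ shift supplying the $t/\omega$ term) checks out and reproduces $\theta_{j\ell}(t)$ exactly.
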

	\begin{proof}
		Same calculations as the proof of Prop.~\ref{prop:dephasing}.
	\end{proof}
	When $f_j=-f_\ell$, in particular, $\theta_{j\ell}(t)=0$ and the result of Prop.~\ref{prop:dephasing} is recovered. Notice that, again, while Eq.~\eqref{eq:evolution} holds under the strict constraints~\eqref{eq:constraint} (which, in particular, constrain all coupling functions $f_j(\omega)$ to vanish sufficiently quickly at $\omega=0$), the final result makes sense for a much larger class of function: indeed, since
	\begin{equation}
		1-\cos\omega t\sim\frac{1}{2}\omega^2t^2,\qquad\sin\omega t\sim\omega t\qquad (\omega\to0),
	\end{equation}
	all integrands in Eqs.~\eqref{eq:dephasing_gsb}--\eqref{eq:theta} have no singularities near $\omega=0$ even if $f_j(0),f_\ell(0)$ are finite.
	
	However, while this discussion largely mirrors the one in the previous section, an important difference emerges: while the dephasing function $\varphi(t)$ given by Eq.~\eqref{eq:varphi} was proven to be well-defined for every continuous function $f(\omega)$ satisfying $f(\omega)=\mathcal{O}(1)$ at large values of $\omega$, in this more general case the second integral in Eq.~\eqref{eq:theta} requires the function $|f_j(\omega)|^2-|f_\ell(\omega)|^2$ to vanish at $|\omega|\to\infty$.
	
	With this comment in mind, we can again search for choices of the coupling functions yielding an exponential dephasing at all times. If $f_j(\omega)=-f_\ell(\omega)=\text{const.}$, either on all positive and negative energies ($-\infty<\omega<\infty$) or on positive energies ($0\leq\omega<\infty$), then the additional phase $\theta_{j\ell}(t)$ vanishes identically and, as discussed in the previous section, an exponential dephasing is obtained. This happens for all $j,\ell$ if and only if we can split all $d$ levels into two groups of $d-r$ and $r$ levels such that, up to a suitable permutation of the indices,			
	\begin{equation}\label{fff}
		f_0(\omega) = \ldots = f_r(\omega) = f \  , \qquad f_{r+1}(\omega) = \ldots = f_{d-1}(\omega) = - f  ,
	\end{equation}
	with $f$ being some constant. In this case, all additional phase factors $\theta_{j\ell}(t)=0$ and hence the dephasing matrix has the following structure: for $j,\ell\leq r$ and $j,\ell\geq r$,		
	\begin{equation}\label{D1}
		\varphi_{j\ell}(t) =\e^{-\ii(\omega_j -\omega_\ell)t}
	\end{equation}
	while, for $j \leq r$, $\ell > r$, and for $j  > r $, $\ell < r$			
	\begin{equation}\label{D2}
		\varphi_{j\ell}(t) =\e^{-\ii(\omega_j -\omega_\ell)t}\e^{-\gamma |t|/2} ,  \qquad \gamma = 8\pi |f|^2 ,
	\end{equation}
	In other words, we can decompose the Hilbert space of the system as $\hilb_{\rm S}=\hilb_r \oplus \hilb_{d-r}$, with both $\hilb_r$ and $\hilb_{d-r}$ being \textit{decoherence-free subspaces}~\cite{FREE}. The resulting dynamics is thus completely analogous to the qubit dephasing dynamics with flat form factor, with the two subspaces $\hilb_r$ and $\hilb_{d-r}$ replacing the two qubit levels. In the qubit case, $r=0$ and there is no room for nontrivial decoherence-free subspaces. Accordingly, Eq.~\eqref{fff} implies that the interaction Hamiltonian can be compactly written as			
	\begin{equation}\label{}
		H_{\rm SB} = \Sigma \otimes (b(f) + b^\dagger(f)) ,
	\end{equation}
	with			
	\begin{equation}\label{}
		\Sigma = \Pi_r - \Pi_{d-r} = \sum_{j=0}^{r} \ketbra{j}{j} - \sum_{j=r+1}^{d-1} \ketbra{j}{j} ,
	\end{equation}
	which reduces to $\sigma_z$ in the qubit case.
	
	Clearly, Prop.~\ref{prop:flat} has an immediate generalization to this case.			
	\begin{Proposition} \label{prop:flat_multi0}
		The dephasing-type GSB model with form factors as in Eq.~\eqref{fff} satisfies quantum regression.				
	\end{Proposition}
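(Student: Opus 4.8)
The plan is to reduce the claim to the qubit computation already performed in the proof of Prop.~\ref{prop:flat}(i). The crucial observation is that, under the symmetry~\eqref{fff}, each block Hamiltonian $H_j$ depends on its index only through the free energy $\omega_j$ and through the \emph{sign} $s_j\in\{+1,-1\}$ recording whether $j\leq r$ or $j>r$; indeed $f_j=s_j f$, so the Weyl operator $\W{f_j/\omega}=\W{s_j f/\omega}$ entering the evolution group~\eqref{eq:evolution} is governed by this sign alone. Moreover, since $|f_j(\omega)|^2=|f(\omega)|^2$ for every $j$, all the energy shifts coincide and $\tilde\omega_j-\tilde\omega_\ell=\omega_j-\omega_\ell$, so the scalar phases can be tracked uniformly. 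In effect, the $d$-level problem collapses onto the two-sign structure of the qubit model, with the decoherence-free subspaces $\hilb_r$ and $\hilb_{d-r}$ playing the roles of the two qubit levels.

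First I would insert Eq.~\eqref{eq:evolution} into both sides of the regression identity~\eqref{eq:condition_multi}. On the right-hand side each factor reproduces the dephasing function $\varphi_{j_k\ell_k}(\Delta t_k)$; because $\theta_{j_k\ell_k}\equiv0$ under~\eqref{fff}, this is exactly the purely exponential expression obtained in Eq.~\eqref{eq:regreq1}, now with the integer $(s_{j_k}-s_{\ell_k})$ determined by group membership rather than by two qubit levels. On the left-hand side, collapsing the time-ordered product of Weyl operators by means of the composition law~\eqref{eq:weyl_composition} and the rotation rule~\eqref{eq:weyl-rotation} reproduces verbatim the manipulation leading to Eq.~\eqref{eq:regreq2}, leaving a single vacuum expectation of the Weyl operator with argument $\tfrac{f}{\omega}\sum_{k}(s_{j_k}-s_{\ell_k})(\e^{\ii\omega t_k}-\e^{\ii\omega t_{k-1}})$, multiplied by the same product of scalar phases appearing on the right.

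From here the computation is \emph{identical} to the qubit case: evaluating the vacuum expectation via Eq.~\eqref{eq:weyl_average} produces a quadratic form in the integers $(s_{j_k}-s_{\ell_k})$, and the orthogonality relation~\eqref{eq:regreq3}---which rests solely on the principal-value integral~\eqref{eq:int} and on the flatness $|f(\omega)|^2=\gamma/8\pi$ over the entire real line---diagonalizes it, so that only the diagonal $k=h$ contributions survive. The resulting exponent coincides exactly with the one on the right-hand side, establishing Eq.~\eqref{eq:condition_multi} for all $j_0,\ell_0,\ldots,j_n,\ell_n\in\{0,\dots,d-1\}$.

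I do not anticipate a genuine obstacle, since the symmetry~\eqref{fff} has been engineered precisely so that the multilevel dynamics reduces to the qubit case. The single point requiring care is the verification that the scalar phases $\prod_k\e^{-\ii\Delta t_k(\omega_{j_k}-\omega_{\ell_k})}$ cancel between the two sides, which is immediate once one notes that the energy shifts are index-independent; everything else is inherited unchanged from the proof of Prop.~\ref{prop:flat}(i).
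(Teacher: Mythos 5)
Your proposal is correct and follows exactly the route the paper intends: the paper's own proof is literally ``Same as in the proof of Prop.~\ref{prop:flat},'' and your write-up supplies the details of that reduction, correctly identifying that under Eq.~\eqref{fff} each block depends on its index only through $\omega_j$ and the sign $s_j$, so the qubit computation of Prop.~\ref{prop:flat}(i) carries over verbatim with $(s_{j_k}-s_{\ell_k})$ determined by group membership.
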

	\begin{proof}
		Same as in the proof of Prop.~\ref{prop:flat}.
	\end{proof}
	Since the case presented above, while multidimensional, is essentially analogous to the qubit one, we may wonder whether it is still possible to obtain a ``genuinely'' multidimensional dephasing which satisfies regression. Consider the most general case in which all $f_j(\omega)$ are constant-valued, either on the full real line or on the half-line. Clearly the first addend in Eq.~\eqref{eq:theta} vanishes as long as we require all functions to be real-valued; however, the second integral does \textit{not} converge in both cases, since the integrand is $\mathcal{O}(\omega^{-1})$ at large $|\omega|$.
	
	{Only} in the case $-\infty<\omega<\infty$, it is possible to recover a finite value via a cutoff procedure. Precisely, if we engineer the couplings $f_j(\omega)$ such that			
	\begin{equation}\label{f[]}
		f_j(\omega) = f_j \ \ \omega \in [-\omega_{\rm cut},\omega_{\rm cut}] , \ \ \ f_j \in \mathbb{R} ,
	\end{equation}
	and $f_j(\omega) =0$ otherwise, then $\theta_{j\ell}(t) = 0$, and hence the cutoff dependent dephasing reads
	\begin{equation}\label{lim}
		\varphi^{\omega_{\rm cut}}_{j\ell}(t) =  \e^{-\ii(\omega_j-\omega_\ell)t} \exp\left\{-  (f_j-f_\ell)^2 \int_{-\omega_{\rm cut}}^{\omega_{\rm cut}} \mathrm{d}\omega\;\frac{1-\cos\omega t}{\omega^2}\right\}.
	\end{equation}
	Finally, in the limit $\omega_{\rm cut}\to\infty$,
	\begin{equation}\label{}
		\varphi_{j\ell}(t) := \lim_{ \omega_{\rm cut} \to \infty}  \varphi^{\omega_{\rm cut}}_{j\ell}(t) = \e^{-\ii(\omega_j-\omega_\ell)t}\e^{-|\gamma_j - \gamma_\ell||t|/2} ,
	\end{equation}
	one obtains a semigroup dephasing. Moreover, again repeating the same computations as in the proof of Prop.~\ref{prop:flat}, one obtains the following result.
	\begin{Proposition} \label{prop:flat_multi} The dephasing-type GSB model with form factors as in Eqs.~\eqref{f[]}--\eqref{lim} satisfies quantum regression in the limit $\omega_{\rm cut}\to\infty$.
	\end{Proposition}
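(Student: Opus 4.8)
The plan is to verify the regression identity~\eqref{eq:condition_multi} of Prop.~\ref{prop:regression_multi} directly, mirroring the computation in the proof of Prop.~\ref{prop:flat} but with the scalar coefficients $s_jf$ of the qubit computation replaced by the constants $f_j$. The passage from the two values $\pm f$ to $d$ distinct constants is harmless, because the combined Weyl argument depends linearly on the $f_j$ while the vacuum expectation is quadratic in it. Since each $f_j(\omega)$ equals the constant $f_j$ on the symmetric window $[-\omega_{\rm cut},\omega_{\rm cut}]$ and vanishes elsewhere, the evolution group~\eqref{eq:evolution} is available, so I would first rewrite the left-hand side of~\eqref{eq:condition_multi}---expressed, as in~\eqref{eq:condition-pd}, as a vacuum expectation with $\rho_{\rm B}=\ketbra{\vac}{\vac}$---and collapse the whole time-ordered product into a single vacuum expectation of a Weyl operator, using repeatedly~\eqref{eq:weyl_composition} and~\eqref{eq:weyl-rotation}.

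After stripping the free-energy phases, the argument of the surviving Weyl operator is
\[
g(\omega)=\frac{1}{\omega}\sum_{k=0}^{n}(f_{j_k}-f_{\ell_k})\bigl(\e^{\ii\omega t_k}-\e^{\ii\omega t_{k-1}}\bigr),
\]
supported on $[-\omega_{\rm cut},\omega_{\rm cut}]$. Applying~\eqref{eq:weyl_average} gives $\exp\{-\tfrac12\int\mathrm{d}\omega\,|g(\omega)|^2\}$, and expanding $|g|^2$ yields exactly the double sum over $k,h$ of the integrals already evaluated in the proof of Prop.~\ref{prop:flat}. In the limit $\omega_{\rm cut}\to\infty$ the orthogonality relation
\[
\int_{-\infty}^{\infty}\mathrm{d}\omega\;\frac{\bigl(\e^{\ii\omega t_k}-\e^{\ii\omega t_{k-1}}\bigr)\bigl(\e^{-\ii\omega t_h}-\e^{-\ii\omega t_{h-1}}\bigr)}{\omega^2}=2\pi\left(t_k-t_{k-1}\right)\delta_{kh}
\]
kills every cross term, leaving $\exp\{-\pi\sum_{k}(f_{j_k}-f_{\ell_k})^2\Delta t_k\}$; by~\eqref{lim} and~\eqref{eq:int} this is precisely $\prod_k|\varphi_{j_k\ell_k}(\Delta t_k)|$. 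Hence the moduli of the two sides of~\eqref{eq:condition_multi} agree, exactly as in Prop.~\ref{prop:flat}(i).

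The step I expect to be the main obstacle is verifying that all \emph{phase} contributions cancel, so that the equality holds not merely in modulus; this is the feature that distinguishes the symmetric cutoff from the half-line. Three phases must be controlled. The energy shifts satisfy $\tilde\omega_j=\omega_j$ because $\int_{-\omega_{\rm cut}}^{\omega_{\rm cut}}\mathrm{d}\omega\,f_j^2/\omega=0$ by oddness. The intrinsic phase $\theta_{j\ell}(t)$ of~\eqref{eq:theta} vanishes since, for real $f_j$ on a symmetric window, the first integrand is real while the second---$(f_j^2-f_\ell^2)$ times $(t/\omega-\sin\omega t/\omega^2)$---is odd in $\omega$. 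Finally, the phases produced by the Weyl compositions reduce to imaginary parts of integrals of the form $\int\mathrm{d}\omega\,\e^{\ii\omega\tau}/\omega^2$, whose imaginary part $\int\mathrm{d}\omega\,\sin\omega\tau/\omega^2$ is again odd and integrates to zero over $[-\omega_{\rm cut},\omega_{\rm cut}]$; the only surviving phase is the common free-energy factor $\prod_k\e^{-\ii(\omega_{j_k}-\omega_{\ell_k})\Delta t_k}$, which is present on both sides.

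Combining the two parts, the full complex factorization~\eqref{eq:condition_multi} holds in the limit $\omega_{\rm cut}\to\infty$, and Prop.~\ref{prop:regression_multi} then yields quantum regression. I would emphasize that this odd-symmetry cancellation is exactly what fails on the half-line, as in Prop.~\ref{prop:flat}(ii): there the second integral in~\eqref{eq:theta} neither vanishes nor even converges without the symmetric window, and the residual phases obstruct the exact factorization. No estimates beyond the orthogonality integral and these symmetry cancellations are required, so the remaining manipulations are identical to those in the proof of Prop.~\ref{prop:flat}.
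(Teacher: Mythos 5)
Your proposal is correct and follows exactly the route the paper takes: the paper's own ``proof'' is the single remark that one repeats the computations of Prop.~\ref{prop:flat} with the constants $f_j$ in place of $\pm f$, and you have simply carried that out, correctly isolating the two ingredients that make it work (the orthogonality integral killing the cross terms in $k,h$, and the odd-symmetry cancellation of all phase contributions on the symmetric window, which is precisely what fails for an asymmetric cutoff). Your write-up is in fact more explicit than the paper's on the phase bookkeeping ($\tilde\omega_j=\omega_j$, $\theta_{j\ell}\equiv0$, and the Weyl-composition phases), but it is the same argument.
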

	We point out, however, that such a result is crucially dependent on the choice of cutoff, differently from what happens in the qubit case (Prop.~\ref{prop:flat}) as well as in its generalization based on decoherence-free subspaces (Prop.~\ref{prop:flat_multi0}). Choosing a different (asymmetric) cutoff procedure would yield an additional phase term which would spoil the validity of quantum regression, though still ensuring a weaker condition as the one in Eq.~\eqref{eq:condition-pd-weaker} for the qubit case.
	
	\section{Concluding remarks} \label{SEC-V}
	
	In this work we have provided an extensive comparative study of two common (and inequivalent) notions of Markovianity for open quantum systems, namely CP-divisibility and quantum regression, for a class of Hamiltonians whose reduced dynamics yields a purely dephasing channel. In doing so, we have first analyzed the two-dimensional (qubit) case, and eventually studied the general one.
	
	Necessary and sufficient conditions for CP-divisibility have been formulated in terms of the dephasing matrix characterizing such channels; remarkably, for these systems, CP-divisibility is equivalent to the (generally weaker) P-divisibility condition. Besides, a hierarchy of necessary and sufficient conditions for the validity of quantum regression has been found. Remarkably, for these models, the validity of the semigroup property at all times (and thus, \textit{a fortiori}, of CP-divisibility) is a necessary condition for quantum regression to hold; in fact, the semigroup property coincides with one of the conditions that the two-time correlations functions must satisfy. However, this condition is far from being sufficient for general systems: in fact, regression is violated whenever the ``blocks'' of the total system-bath Hamiltonian commute unless the dephasing is trivial, even if a semigroup dynamics is observed at all times.
	
	Working in parallel with the general case, we have also investigated in greater detail a family of dephasing-type generalized spin-boson (GSB) models, for which all conditions imposed by quantum regression can be recast in an explicit form by means of Weyl operators. In particular, a semigroup evolution is obtained in the (singular) limit case in which the coupling between the system and the boson environment is ``flat''. Remarkably, in all such cases we have shown that quantum regression is either exactly satisfied, or satisfied up to additional phase terms. We stress that such results are exact: no weak-coupling assumption is required.
	
	Incidentally, our analysis shows many analogies between the class of channels studied in this work and the multilevel amplitude-damping dynamics considered in Ref.~\cite{regression}, despite the two models describing strikingly different phenomena. In both cases, CP-divisibility and P-divisibility are equivalent; both channels can be realized by means of GSB models; in both cases, a semigroup evolution is obtained in the limiting case of a flat system-bath coupling, and, in such a case, quantum regression is also satisfied beyond the weak-coupling approximation. Such analogies leave room for many open questions. For one: is the equivalence between CP-divisibility and P-divisibility an accidental coincidence, or does such a correspondence hold for a larger family of quantum channels?
	
	In this direction, it is possibly worth noticing that, at a fixed time, both the (multilevel) amplitude-damping channel and the (multilevel) dephasing channel provide a homeomorphism between a group of matrices, respectively endowed with the usual row-column product and the Hadamard product, and a subset of the quantum channels; that is, in both cases, $\Pi_{A\star B}=\Pi_A\Pi_B$, with $\star$ being the corresponding product. This observation might be at the root of the analogies between the amplitude-damping and phase-damping channels, and may inspire a more thorough investigation of the class of channels satisfying similar properties.
	
	\section*{Acknowledgments}
	
	D.L. was partially supported by Istituto Nazionale di Fisica Nucleare (INFN) through the project “QUANTUM” and by the Italian National Group of Mathematical Physics (GNFM-INdAM); he also thanks the Institute of Physics at the Nicolaus University in Toru\'n for its  hospitality. D.C. was supported by the Polish National Science Center Project No. 2018/30/A/ST2/00837.
	
	\appendix
	
	\section*{Appendix: a complex integral}
	Let $a,b$ two real numbers. We shall compute the principal value integral
	\begin{equation}\label{eq:complexint}
		\pvint_{-\infty}^{\infty}\frac{\e^{\ii\omega a}-\e^{\ii\omega b}}{\omega^2}\,\mathrm{d}\omega.
	\end{equation}
	respectively for both $a,b\geq0$, or both $a,b\leq0$; the general case may be treated as well with analogous techniques.
	
	Suppose $a,b\geq0$. First of all, notice that the integrand can be immediately extended to a meromorphic function in the complex plane:
	\begin{equation}
		\zeta\in\mathbb{C}\setminus\{0\}\rightarrow f(\zeta)=\frac{\e^{\ii\zeta a}-\e^{\ii\zeta b}}{\zeta^2},
	\end{equation}
	with $\zeta=0$ being a simple pole for $f(\zeta)$, since $\e^{\ii \zeta a}-\e^{\ii\zeta b}\sim\ii(a-b)\zeta$ as $\zeta\to0$. Consequently, the residue of $f(\zeta)$ at the pole is
	\begin{equation}
		\mathrm{Res}_f(0)=\lim_{z\to0}f(\zeta)\zeta=\ii(a-b).
	\end{equation}
	The integral~\eqref{eq:complexint} can be obtained by circumventing the singularity at $\zeta=0$, and closing the integration contour in a semicircle, as shown in Fig.~\ref{fig:contour1}.
	
	\begin{figure}[ht]
		\centering
		\begin{subfigure}{0.5\linewidth}\centering
			\includegraphics[width=0.8\linewidth]{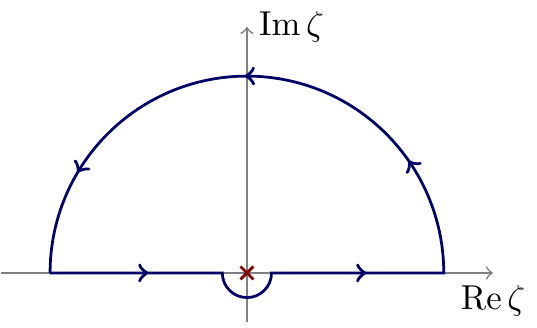}
			\caption{Case $a,b\geq0$.}\label{fig:contour1}
		\end{subfigure}%
		\begin{subfigure}{0.5\linewidth}\centering
			\includegraphics[width=0.8\linewidth]{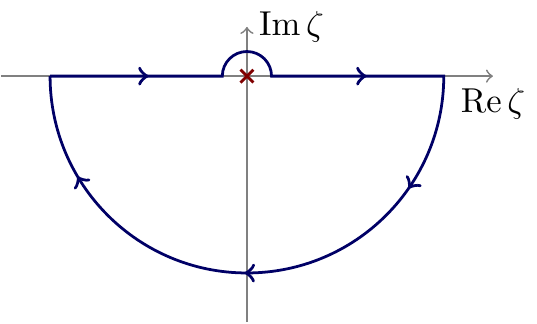}
			\caption{Case $a,b\leq0$.}\label{fig:contour2}
		\end{subfigure}
		\caption{Integration contour in the complex plane for the function $f(\zeta)$, depending on the signs of $a$ and $b$.}
		\label{fig:contour}
	\end{figure}
	By the residue theorem (and taking into account the contribution of the small semicircle enclosing the singularity), we obtain
	\begin{equation}\label{eq:complexint_result1}
		\pvint_{-\infty}^{\infty}\frac{\e^{\ii\omega a}-\e^{\ii\omega b}}{\omega^2}\,\mathrm{d}\omega=-\pi(a-b),\qquad a,b\geq0.
	\end{equation}
	Finally, the case $a,b\leq0$ can be either solved with the same technique as in the previous case, this time enclosing the path in the lower half-plane (see Fig.~\ref{fig:contour2}), or, equivalently, by reducing the problem to the previous case via the variable change $\omega\leftrightarrow-\omega$. This yields
	\begin{equation}\label{eq:complexint_result2}
		\pvint_{-\infty}^{\infty}\frac{\e^{\ii\omega a}-\e^{\ii\omega b}}{\omega^2}\,\mathrm{d}\omega=\pi(a-b),\qquad a,b\leq0.
	\end{equation}
	Finally, as a particular case, setting $a=0$ one obtains
	\begin{equation}\label{eq:complexint_result3}
		\pvint_{-\infty}^{\infty}\frac{1-\e^{\ii\omega b}}{\omega^2}\,\mathrm{d}\omega=\begin{cases}
			\pi b,&b\geq0;\\
			-\pi b,&b\leq0
		\end{cases}=\pi|b|.
	\end{equation}

\end{document}